\newtheorem{thrm}{Theorem}
\renewcommand{\vec}[1]{\mathbf{#1}}
\DeclareMathOperator{\Li}{Li}
\begin{document}

\title{Multipartite minimum uncertainty products}

\author{E. Shchukin}
\email{evgeny.shchukin@gmail.com}
\affiliation{Lehrstuhl f\"ur Modellierung und Simulation,
Universit\"at Rostock, D-18051 Rostock, Germany}

\begin{abstract}
In our previous work we have found a lower bound for the multipartite uncertainty product of the position and momentum observables over all separable states.
In this work we are trying to minimize this uncertainty product over a broader class of states to find the fundamental limits imposed by nature on the observable quantites.
We show that it is necessary to consider pure states only and find the infimum of the uncertainty product over a special class of pure states 
(states with spherically symmetric wave functions). It is shown
that this infimum is not attained. We also explicitly construct a parametrized family of states that 
approaches the infimum by varying the parameter. Since the constructed states beat the lower bound for separable states, they are entangled.
We thus show that there is a gap that separates the values of a simple measurable quantity for separable states from entangled ones and we also try 
to find the size of this gap.
\end{abstract}

\pacs{03.65.Ud, 03.65.Ta, 42.50.Dv}

\maketitle

\section{Introduction}

The famous Heisenberg uncertainty relation states that the standard deviation of position and momentum observables of any quantum states
cannot be simultaneously small or, more precisely, that (in appropriate units)
\begin{equation}\label{eq:hei}
    \sigma_x \sigma_p \geqslant \frac{1}{2}.
\end{equation}
This inequality is tight and becomes equality for the vacuum state whose wave function in the position representation is
\begin{equation}\label{eq:vac}
    \psi_0(x) = \frac{1}{\sqrt[4]{\pi}} e^{-x^2/2}.
\end{equation}
In \cite{PhysRevA.84.052325} we have shown that the inequalities \eqref{eq:hei} for different degrees of freedom can be 
"multiplied" side-by-side to produce inequalities valid for all completely separable states. For example, the inequalities
\begin{equation}\label{eq:sxp}
    \sigma_{xx} \sigma_{pp} \geqslant \frac{1}{4} \quad \text{and} \quad
    \sigma_{xp} \sigma_{px} \geqslant \frac{1}{4}
\end{equation}
are valid for all bipartite separable states. These inequalities for separable states are also tight since for the two-dimensional
vacuum they become equality. We have also shown that these inequalities can be violated. A question of fundamental interest is 
what is infimum of the products on left hand side of the inequalities \eqref{eq:sxp} over the set of all
quantum states?

For any physical quantum state these products are strictly positive, but this does not mean that the infimum is also positive.
The problem of determining this infimum is surprisingly more complicated than in a single partite case. In this work
we do not completely solve this problem, we restrict it to the class of states with real spherically symmetric wave functions and find 
the infimum over this class of states. This infimum is not attained, so we construct family of states parametrized
by a real parameter $0 < \xi < 1$ such that when $\xi \to 1$ the uncertainty product approaches the infimum. More precisely, 
we construct a family of states such that $\sigma_{xp} \sigma_{px} < 1/4$ for all $0 < \xi < 1$ and
$\sigma_{xp} \sigma_{px} \to 1/8$
when $\xi \to 1$. We also generalize our construction to a larger number of parties. As the number of parties growth,
the problem becomes more and more complicated, so we obtain explicit
results only for four-partite and six-partite cases and outline the general approach.

The paper is organized as follows. In section \ref{sec:up} we set up the environment in which we work in this paper and prove
that to minimize the uncertainty product under consideration pure states are sufficient. 
In section \ref{sec:simple-state} we derive an analytical expression for the state obtained numerically in our previous work and 
introduce the technique that is used to obtain the main result of this paper. In section \ref{sec:s} we obtain our main result, i.e. 
we derive family of bipartite states
that minimize the uncertainty product in the special class of pure states with real spherically symmetric wave functions. 
The derivation is not difficult but lengthy, so it is divided 
into several steps to make it easier to follow. In section \ref{sec:p} we analyze some properties of these states. In section 
\ref{sec:g} we generalize our construction to a more general case of a larger number of parties. The conclusion is
in section VII. Proofs of the auxiliary results are given in the appendices 
at the end of the paper. We make an intensive use of the book \cite{gr}, which we refer to simply as GR.
It is rather a large book with a lot of results, so we give the page and formula numbers to easily locate the results we refer to.

\section{Multipartite uncertainty product} \label{sec:up}

The general form of the uncertainty relation \eqref{eq:hei} states that for arbitrary observables $\hat{A}$ and $\hat{A}'$ and 
for all quantum states the inequality 
$\sigma_A \sigma_{A'} \geqslant 1/2|\langle[\hat{A}, \hat{A}']\rangle|$ is valid, 
where $\sigma_A = \sqrt{\langle\hat{A}^2\rangle - \langle\hat{A}\rangle^2}$ is the standard deviation of the observable $\hat{A}$. 
It has been shown in \cite{PhysRevA.84.052325} that all completely separable $N$-partite states satisfy the inequality
\begin{equation}
    \sigma_{A_1 \ldots A_N} \sigma_{A'_1 \ldots A'_N} \geqslant \frac{1}{2^N} |\langle[\hat{A}_1, \hat{A}'_1] \ldots [\hat{A}_N, \hat{A}'_N]\rangle|
\end{equation}
for arbitrary observables $\hat{A}_i$ and $\hat{A}'_i$ acting on the $i$th part, $i = 1, \ldots, N$.
If we take $\hat{A}_i = \hat{x}_i$, $\hat{A}'_i=\hat{p}_i$ for all $i$, we then get the inequality
\begin{equation}\label{eq:xp}
    \sigma_{x \ldots x} \sigma_{p \ldots p} \geqslant \frac{1}{2^N}.
\end{equation}
This inequality is tight --- it is easy to verify that for the $N$-partite vacuum state with the wave-function
\begin{equation}
    \psi_0(x_1, \ldots, x_N) = \frac{1}{\sqrt[4]{\pi^N}} e^{-(x^2_1 + \ldots + x^2_N)/2}
\end{equation}
the left-hand side of the inequality \eqref{eq:xp} is exactly $2^{-N}$. The natural question is --- how strong can the inequality \eqref{eq:xp}
be violated?

First of all note that by analogy with the inequality \eqref{eq:xp} the following $2^{N-1}$ inequalities are also valid:
\begin{equation}\label{eq:xpxp}
    \sigma_{x p \ldots} \sigma_{p x \ldots} \geqslant \frac{1}{2^N},
\end{equation}
where the two sequences of $x$ and $p$ complement each other --- if one of them has $x$ in some position then the other has $p$ in the same position.
There are $2^N$ sequences of $x$ and $p$ of length $N$, but the number of different inequalities is only half of this number, i.e. $2^{N-1}$.  
If we can find a state that violates one inequality of this kind then we can find states that violate any such inequality. 
To do it we need the phase-shifting operator \cite{leonhardt}
\begin{equation}
    \hat{U}_\varphi = e^{-i\varphi\hat{n}} = \exp\left(-\frac{i\varphi}{2}\left(x^2 - \frac{d^2}{dx^2}-1\right)\right).
\end{equation}
The action of this operator on wave functions is simply the fractional Fourier transform \cite{ff}
\begin{equation}
\begin{split}
    \mathcal{F}_\varphi[\psi(x)](p) &= \frac{e^{i(\varphi/2-\pi/4)}}{\sqrt{2\pi\sin\varphi}}e^{i\cot\varphi\frac{p^2}{2}} \times \\
    &\int_{\mathbf{R}}\exp\left(i \cot\varphi\frac{x^2}{2} - i \frac{px}{\sin\varphi}\right) \psi(x) \, dx.
\end{split}
\end{equation}
As one can easily see, if $\varphi = \pi/2$ then 
\begin{equation}
    \mathcal{F}_{\pi/2}[\psi(x)](p) = \frac{1}{\sqrt{2\pi}} \int_{\mathbf{R}} \psi(x) e^{-ipx} \, dx \equiv \psi(p)
\end{equation}
is the momentum representation of the state $|\psi\rangle$ with the wave function $\psi(x)$ in the coordinate representation.
Let us take a pure bipartite ($N = 2$) state with the wave function $\psi(x, y)$ and apply partial fractional Fourier transform
$\mathcal{F}_{\pi/2}$ to this wave function with respect to the second argument. Denote this new wave function $\tilde{\psi}$.
Then for the new state we have the relations
\begin{equation}
    \tilde{\sigma}_{xp} = \sigma_{xx}, \quad \tilde{\sigma}_{px} = \sigma_{pp}.
\end{equation}
This means that if the original state violates the inequality \eqref{eq:xp} for $N=2$ then the new state
violates the inequality 
\begin{equation}\label{eq:xppx}
    \sigma_{xp}\sigma_{px} \geqslant 1/4. 
\end{equation}
The similar conclusion is also valid in the general multipartite case --- from a state that violates one of the inequalities \eqref{eq:xpxp}
we can construct $2^{N-1}-1$ states that violate the other $2^{N-1}-1$ inequalities of this form.

If we want to minimize the inequalities \eqref{eq:xpxp} then we can restrict our attention to pure states only due to the
\begin{thrm}
If the inequality 
\begin{equation}\label{eq:AB}
    \sigma_{AB}\sigma_{A'B'} \geqslant \delta > 0
\end{equation}
is valid for all bipartite pure quantum states of some quantum system, where operators $\hat{A}$, $\hat{A}'$ act on one degree of freedom and 
$\hat{B}$, $\hat{B}'$ act on the other one, then this inequality is also valid for all states (i.e. including mixed states). The similar 
statement can be extended to multipartite case.
\end{thrm}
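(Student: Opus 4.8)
The plan is to exploit the fact that the variance of a fixed observable is a \emph{concave} functional of the quantum state, which lets me reduce an arbitrary mixed state to the pure states appearing in one of its decompositions. Since $\hat{A}$, $\hat{A}'$ act on one degree of freedom and $\hat{B}$, $\hat{B}'$ on the other, the product observables $\hat{M} = \hat{A}\hat{B}$ and $\hat{N} = \hat{A}'\hat{B}'$ are well-defined Hermitian operators, and $\sigma_{AB}$, $\sigma_{A'B'}$ are simply their standard deviations $\sigma_M$, $\sigma_N$. It therefore suffices to establish $\sigma_M\sigma_N \geqslant \delta$ for every density operator $\rho$, given that this already holds on pure states.

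First I would write the variance as $\sigma_M^2(\rho) = \operatorname{Tr}(\rho\hat{M}^2) - (\operatorname{Tr}(\rho\hat{M}))^2$ and observe that, as a function of $\rho$, the first term is linear while the second is the negative of the square of the real linear functional $\rho \mapsto \operatorname{Tr}(\rho\hat{M})$, hence concave; the sum is thus concave overall. Consequently, for any convex decomposition $\rho = \sum_k p_k|\psi_k\rangle\langle\psi_k|$ into pure states one has $\sigma_M^2(\rho) \geqslant \sum_k p_k\,\sigma_M^2(\psi_k)$, and likewise for $\hat{N}$.

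Next I would combine these two bounds with the Cauchy--Schwarz inequality applied to the weights $p_k$:
\begin{equation*}
\begin{split}
    \sigma_M^2(\rho)\,\sigma_N^2(\rho) &\geqslant \Bigl(\sum_k p_k\,\sigma_M^2(\psi_k)\Bigr)\Bigl(\sum_k p_k\,\sigma_N^2(\psi_k)\Bigr) \\
    &\geqslant \Bigl(\sum_k p_k\,\sigma_M(\psi_k)\sigma_N(\psi_k)\Bigr)^2.
\end{split}
\end{equation*}
By the hypothesis each factor satisfies $\sigma_M(\psi_k)\sigma_N(\psi_k) \geqslant \delta$, so the right-hand side is at least $\delta^2\bigl(\sum_k p_k\bigr)^2 = \delta^2$, which yields $\sigma_M(\rho)\sigma_N(\rho) \geqslant \delta$ as required. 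The multipartite version goes through verbatim with the product observables $\hat{M} = \hat{A}_1\cdots\hat{A}_N$ and $\hat{N} = \hat{A}'_1\cdots\hat{A}'_N$.

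The argument is short, and the only point that requires genuine care is the \emph{direction} of the inequality: one must check that the variance is concave rather than convex in the state, so that passing to a pure-state decomposition lowers (not raises) each variance, and then invoke Cauchy--Schwarz in exactly the form that controls the \emph{product} $\sigma_M\sigma_N$ rather than merely the sum of the two variances. I expect this coupling of concavity with Cauchy--Schwarz to be the crux; everything else is routine.
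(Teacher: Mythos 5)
Your proposal is correct, and it reaches the conclusion by a genuinely different final step than the paper, although both arguments rest on the same structural fact: the concavity of the variance under mixing, $\sigma^2_M(\varrho) \geqslant \sum_k p_k \sigma^2_M(\psi_k)$ for any decomposition $\varrho = \sum_k p_k |\psi_k\rangle\langle\psi_k|$. The paper never multiplies the two concavity bounds together; instead it linearizes the product by working with the one-parameter family of combinations $\sigma^2_{AB} + \lambda^2 \sigma^2_{A'B'} \geqslant 2\lambda\,\sigma_{AB}\sigma_{A'B'} \geqslant 2\lambda\delta$, applies concavity to this combination for every $\lambda \geqslant 0$, and then invokes Lemma~3 of the earlier paper \cite{PhysRevA.84.052325} (in effect, optimizing over $\lambda$) to convert the family of linear bounds back into the product bound for the mixed state. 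You instead apply concavity to each variance separately and couple the two resulting sums with the Cauchy--Schwarz inequality over the ensemble weights,
\begin{equation*}
    \Bigl(\sum_k p_k\,\sigma^2_M(\psi_k)\Bigr)\Bigl(\sum_k p_k\,\sigma^2_N(\psi_k)\Bigr) \geqslant \Bigl(\sum_k p_k\,\sigma_M(\psi_k)\,\sigma_N(\psi_k)\Bigr)^{2} \geqslant \delta^2,
\end{equation*}
which is legitimate since both sides of each concavity bound are non-negative. The two devices are close cousins (the $\lambda$-trick is one standard way to prove Cauchy--Schwarz), but your version is self-contained: it needs no external lemma and makes the coupling of the two variances explicit in a single line, whereas the paper's version reuses the machinery of the prior work at the cost of an opaque final appeal to its Lemma~3. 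Your observation that $\hat{A}\hat{B}$ and $\hat{A}'\hat{B}'$ are Hermitian because the factors act on different subsystems, and your care about the direction of the concavity inequality, are exactly the points that need checking; both are handled correctly.
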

The statement of this theorem is intuitively clear because by mixturing quantum states we can only increase the dispersion.
Equivalently it can be formulated as follows:
\begin{equation}
    \inf_{\text{all}} \sigma_{AB}\sigma_{A'B'} = \inf_{\text{pure}} \sigma_{AB}\sigma_{A'B'},
\end{equation}
where \textit{all} means all bipartite quantum states and \textit{pure} means bipartite pure states only.
So, if we want to minimize the inequalities \eqref{eq:xpxp} then we should focus our efforts on pure states.
It is a very useful result since a wave function is a much simpler object than a density operator (it is not necessary to care about
positivity). 
\begin{proof}
The proof follows the idea of \cite{PhysRevA.84.052325}. We have the inequality
\begin{equation}
    \sigma^2_{AB} + \lambda^2\sigma^2_{A'B'} \geqslant 2 \lambda \sigma_{AB}\sigma_{A'B'} \geqslant 2\lambda \delta,
\end{equation}
which is valid for all bipartite pure states and for all non-negative $\lambda$ by assumption. Now let us take a general mixed state $\varrho$.
It can be written as a mixture of pure states
\begin{equation}
    \varrho = \sum_k p_k |\psi_k\rangle\langle\psi_k|,
\end{equation}
where $p_k$ are non-negative numbers such that $\sum_k p_k = 1$ and $|\psi_k\rangle$ are some bipartite (pure) states. 
Due to the concavity of the dispersion, we have
\begin{equation}
\begin{split}
    \sigma^2_{AB}(\varrho) &+ \lambda^2\sigma^2_{A'B'}(\varrho) \\
    &\geqslant \sum_k p_k(\sigma^2_{AB}(|\psi_k\rangle) + \lambda^2\sigma^2_{A'B'}(|\psi_k\rangle)) \\
    & \geqslant 2\lambda \delta \sum_k p_k = 2\lambda \delta.
\end{split}
\end{equation}
From Lemma 3 of \cite{PhysRevA.84.052325} we conclude that 
\begin{equation}
    4\delta^2 \leqslant 4 \sigma^2_{AB}(\varrho)\sigma^2_{A'B'}(\varrho),
\end{equation}
which is equivalent to the inequality \eqref{eq:AB} for the mixed state $\varrho$.
\end{proof}

As a measure of violation of the inequalities \eqref{eq:xpxp} we take the ratio of the right-hand to the left-hand side. 
In \cite{PhysRevA.84.052325} a state violating the inequality \eqref{eq:xppx} has been constructed. The violation of this inequality
for that state is $\approx 1.2192$. In this work we try to do better and
find the minimal value of the product $\sigma_{xp} \sigma_{px}$ for a special class of states, thus constructing highly entangled (with respect to this 
inequality) states and generalize our construction for larger numbers of parties.

\section{Simple state}\label{sec:simple-state}

We start our discussion of states that violate the inequality \eqref{eq:xppx} with the derivation of an analytically expression for the coefficients of the state
\begin{equation}\label{eq:psi}
    |\psi\rangle = \sum^{+\infty}_{n=0} c_n |2n,2n\rangle
\end{equation}
that has been used in \cite{PhysRevA.84.052325} and get an exact analytical expression for the violation value of $1.2192$
that has been previously obtained numerically. It has been shown that the difference
$\sigma_{xp} \sigma_{px} - 1/4$ for this state is given by the following quadratic form of the coefficients $c_n$:
\begin{equation}\label{eq:Q}
    Q = \sum^{+\infty}_{n=0} \bigl(2n(2n+1)c^2_n -(n+1)(2n+1)c_n c_{n+1}\bigr).
\end{equation}
This quadratic form has been minimized by numerically computing the minimal eigenvalue $\lambda_{\mathrm{min}} \approx -0.04495$ of the 
truncated matrix corresponding to this form.  The components
of the normalized eigenvector corresponding to the minimal eigenvalue are the coefficients of the state \eqref{eq:psi} and for this state we have
$\sigma_{xp} \sigma_{px} = (1/4) + \lambda_{\mathrm{min}} \approx 0.20505$. 
We now derive an analytical expression for these coefficients. The method presented here for this simpler case will be useful later when we present technique to solve
more complicated and more general problems.

Factoring out the common terms under the summation sign in the form \eqref{eq:Q}, we arrive to a simpler expression
\begin{equation}
    Q = -c_0 c_1 + \sum^{+\infty}_{n=1}\left(2+\frac{1}{n}\right)\tilde{c}_n(2\tilde{c}_n - \tilde{c}_{n+1}),
\end{equation}
where we use the coefficient transformation $\tilde{c}_n = n c_n$, $n \geqslant 1$. We do not transform the 
coefficient $c_0$ and the coefficient $c_1$ is left unchanged: $\tilde{c}_1 = c_1$. 
Let us assume that the transformed coefficients satisfy the simple relation $\tilde{c}_{n+1} = \xi \tilde{c}_n$, 
$n \geqslant 1$, where $|\xi| < 1$ is a real parameter to be determined, i.e. that they form a geometrical progression. 
From this relation we can easily derive that $\tilde{c}_n = \xi^{n-1}c_1$, $n \geqslant 1$ and then express 
the quadratic form $Q$ as a function of $c_0$ and $c_1$ only
\begin{equation}\label{eq:Q2}
\begin{split}
    Q &= -c_0 c_1 + c^2_1 (2-\xi) \sum^{+\infty}_{n=1}\left(2+\frac{1}{n}\right)\xi^{2n-2} \\
      &= -c_0 c_1 + (2-\xi)\left(\frac{2}{1-\xi^2} - \frac{\ln(1-\xi^2)}{\xi^2}\right) c^2_1.
\end{split}
\end{equation}
On the other hand, from the relation for the transformed coefficients we immediately obtain the expression for the original coefficients $c_n$
\begin{equation}\label{eq:cn}
    c_n = \frac{\xi^{n-1}}{n}c_1, \quad n \geqslant 1.
\end{equation}
The first two coefficients $c_0$ and $c_1$ are not independent, from the normalization condition of the state \eqref{eq:psi} we obtain the following relation between them:
\begin{equation}\label{eq:c0c1}
    c^2_0 + c^2_1 \sum^{+\infty}_{n=1} \frac{\xi^{2n-2}}{n^2} = c^2_0 + \frac{\Li_2(\xi^2)}{\xi^2} c^2_1 = 1,
\end{equation}
where $\Li_s(z)$ is the polylogarithm special function defined by the infinite power series as \cite{prudnikov}
\begin{equation}
    \Li_s(z) = \sum^{+\infty}_{n=1} \frac{z^n}{n^s}.
\end{equation}
The equation \eqref{eq:c0c1} describes an ellipse and this ellipse can be parametrized as follows:
\begin{equation}\label{eq:c01}
    c_0 = \cos \varphi, \quad c_1 = \frac{\xi}{\sqrt{\Li_2(\xi^2)}} \sin\varphi.
\end{equation}
Substituting these expressions into the equation \eqref{eq:Q2}, we obtain the following expression for our quadratic form:
\begin{equation}\label{eq:Q3}
    Q = -C_1(\xi) \sin(2\varphi)+C_2(\xi)(1-\cos(2\varphi)),
\end{equation}
where the coefficients $C_1(\xi)$ and $C_2(\xi)$ are given by
\begin{equation}
\begin{split}
    C_1(\xi) &= \frac{1}{2}\frac{\xi}{\sqrt{\Li_2(\xi^2)}}, \\
    C_2(\xi) &= \frac{(2-\xi)}{2}\left(\frac{2}{1-\xi^2} - \frac{\ln(1-\xi^2)}{\xi^2}\right)\frac{\xi^2}{\Li_2(\xi^2)}.
\end{split}
\end{equation}
The minimal value of the expression \eqref{eq:Q3} when $\xi$ is fixed and $\varphi$ varies is
\begin{equation}\label{eq:Q0}
    Q_0(\xi) = C_2(\xi) - \sqrt{C^2_1(\xi)+C^2_2(\xi)},
\end{equation}
and the value of the angle $\varphi$ for which this minimal value is attained is determined from the equation
\begin{equation}\label{eq:phi}
    \tan(2\varphi) = \frac{C_1(\xi)}{C_2(\xi)}.
\end{equation}
The plot of the function \eqref{eq:Q0} is shown in Fig.~\ref{fig:psi1}. One can numerically minimize this function (for example, with \textsl{Mathematica}) and get that 
the minimal value is $\approx -0.04495$ for $\xi_{\mathrm{min}} \approx 0.318674$. This result is in perfect agreement with the previous result obtained 
numerically by computing the eigenvalues of the quadratic form \eqref{eq:Q}. To obtain the coefficients $c_n$, we take the parameter $\xi_{\mathrm{min}}$ 
and find the corresponding angle $\varphi_{\mathrm{min}}$ from the equation \eqref{eq:phi}. Then, we compute the coefficients $c_0$ and $c_1$ according to the equation
\eqref{eq:c01}. The rest of the coefficients are obtained from the equation \eqref{eq:cn}. Though we have not strictly proved that $-0.04495$ is the minimal eigenvalue
of the quadratic form \eqref{eq:Q}, we have analytically constructed a state on which this value is attained and shown that this result agrees with the numerical computations.

\begin{figure}
\begin{center}
    \includegraphics[scale=0.8]{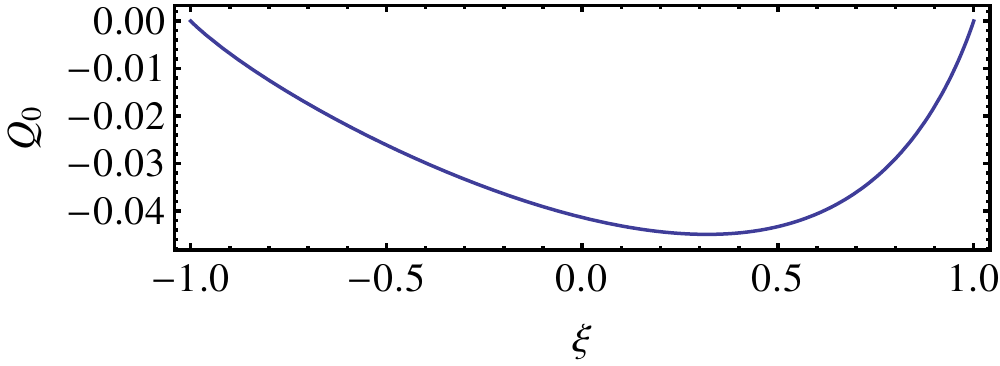}
\end{center}
\caption{The function $Q_0(\xi)$, defined by the equation \eqref{eq:Q0}.}\label{fig:psi1}
\end{figure}

\section{Construction of spherically symmetric states} \label{sec:s}

The state \eqref{eq:psi} has a very special form, so one can ask whether it possible to stronger violate the inequality \eqref{eq:xppx} with a more 
general state. Here, we construct such a family of states that violate this inequality. The construction is lengthy and thus divided into several steps.

\textit{First step.} At this step we analyze the product $\sigma_{xp} \sigma_{px}$ and transform it
into a more manageable form. The variances $\sigma^2_{xp}$ and $\sigma^2_{px}$ read as
\begin{equation}
    \sigma^2_{xp} = \langle\hat{x}^2_a \hat{p}^2_b\rangle - \langle\hat{x}_a \hat{p}_b\rangle^2, \quad
    \sigma^2_{px} = \langle\hat{p}^2_a \hat{x}^2_b\rangle - \langle\hat{p}_a \hat{x}_b\rangle^2.
\end{equation}
Their product would be easier to deal with if the averages $\langle\hat{x}_a \hat{p}_b\rangle$ and $\langle\hat{p}_a \hat{x}_b\rangle$ were zero. 
This is the case when, for example, the wave function $\psi(x, y)$ of the state under study is real. In fact, for a real wave function we have
\begin{equation}
    \langle\hat{x}_a \hat{p}_b\rangle = -i\iint_{\mathbf{R}^2} \psi(x,y) x \frac{\partial \psi}{\partial y}(x,y) \, dx \, dy  = 0,
\end{equation}
since for any fixed $x$ the integral over $y$ can be taken explicitly 
\begin{equation}
    \int \psi \frac{\partial\psi}{\partial y} \, dy = \frac{1}{2}(\psi^2(x,+\infty)-\psi^2(x,-\infty)) = 0,
\end{equation}
and is equal to zero due to the normalization property of $\psi$. In the same way we obtain the equality $\langle\hat{p}_a \hat{x}_b\rangle = 0$.
We have that for a real wave function we can write
\begin{equation}\label{eq:s2}
\begin{split}
    \sigma^2_{xp} \sigma^2_{px} &= \langle\hat{x}^2_a \hat{p}^2_b\rangle \langle\hat{p}^2_a \hat{x}^2_b\rangle \\
    & = \frac{1}{4}\bigl(\langle\hat{x}^2_a\hat{p}^2_b + \hat{p}^2_a\hat{x}^2_b\rangle^2 - \langle\hat{x}^2_a\hat{p}^2_b - \hat{p}^2_a\hat{x}^2_b\rangle^2\bigr)
\end{split}
\end{equation}
and conclude that the product $\sigma_{xp} \sigma_{px}$ can be bounded from above by the square root of the first term on the right-hand side of this expression,
i.e. the following inequality is valid:
\begin{equation}
    \sigma_{xp} \sigma_{px} \leqslant \frac{1}{2}\langle\hat{x}^2_a\hat{p}^2_b + \hat{p}^2_a\hat{x}^2_b\rangle \equiv \langle\hat{Z}\rangle.
\end{equation}
So, at this step, we have reduced our minimization problem to the study of the operator $\hat{Z}$ on the set of real wave functions.

\textit{Second step.} In the standard position representation, the operator $\hat{Z}$ is given by the following differential operator:
\begin{equation}
    \hat{Z} = \frac{1}{2}(\hat{x}^2_a\hat{p}^2_b + \hat{p}^2_a\hat{x}^2_b) = -\frac{1}{2}\left(x^2\frac{\partial^2}{\partial y^2} + y^2\frac{\partial^2}{\partial x^2}\right).
\end{equation}
The average value of this operator on a real wave function $\psi(x,y)$ can be computed as
\begin{equation}\label{eq:Zpsi}
    \langle\hat{Z}\rangle = -\frac{1}{2} \iint_{\mathbf{R}^2} 
    \psi \left(x^2\frac{\partial^2 \psi}{\partial y^2} + y^2\frac{\partial^2 \psi}{\partial x^2}\right)\,dx\,dy.
\end{equation}
Integrating by parts, this expression can be transformed as follows:
\begin{equation}
    \int_{\mathbf{R}} \psi \frac{\partial^2 \psi}{\partial x^2} \, dx = \left.\psi \frac{\partial \psi}{\partial x} \right|^{+\infty}_{-\infty} - 
    \int_{\mathbf{R}} \left(\frac{\partial \psi}{\partial x}\right)^2 \, dx.
\end{equation}
For a normalizable wave function the first term on the right-hand side is zero and, transforming the other term under the integral in Eq.~\eqref{eq:Zpsi}, 
we can conclude that for the average $\langle\hat{Z}\rangle$ we have
\begin{equation}\label{eq:Z}
    \langle\hat{Z}\rangle = \frac{1}{2} \iint_{\mathbf{R}^2} \left(x^2 \left(\frac{\partial \psi}{\partial y}\right)^2 + 
    y^2 \left(\frac{\partial \psi}{\partial x}\right)^2\right)  \, dx \, dy.
\end{equation}
Looking at this expression, it seems natural to write it in polar coordinates. The derivatives with respect to the Cartesian coordinates $x$ and $y$ can be expressed
in terms of the derivatives with respect to polar coordinates $r$ and $\theta$ as follows:
\begin{equation}\label{eq:xy}
\begin{split}
    \frac{\partial}{\partial x} &= \cos\theta \frac{\partial}{\partial r} - \sin\theta \frac{1}{r}\frac{\partial}{\partial \theta}, \\
    \frac{\partial}{\partial y} &= \sin\theta \frac{\partial}{\partial r} + \cos\theta \frac{1}{r}\frac{\partial}{\partial \theta}.
\end{split}
\end{equation}
If we substitute these derivatives into the integral on the right-hand side of Eq.~\eqref{eq:Z}, the resulting expression will not look
any simpler, so we will consider only real spherically symmetric wave function $\psi(x, y)$, i.e. functions that do not depend on the angle $\theta$ when 
written in polar coordinates as $\Psi(r, \theta) = \psi(r \cos\theta, r \sin\theta)$. 
Then $\Psi(r, \theta) \equiv \Psi(r)$, so $\partial \Psi/\partial \theta = 0$ and Eq.~\eqref{eq:Z} takes a simpler form
\begin{equation}\label{eq:Z2}
\begin{split}
    \langle\hat{Z}\rangle &= \frac{1}{4} \int^{+\infty}_0 \int^{2\pi}_0 r^2 \sin^2(2\theta) \left(\frac{d \Psi}{d r}\right)^2 r \,dr \, d\theta \\
    &= \frac{\pi}{4} \int^{+\infty}_0 r^3 \left(\frac{d \Psi}{d r}\right)^2 \, dr.
\end{split}
\end{equation}
The normalization of the real wave function $\psi(x, y)$ can be written as follows:
\begin{equation}\label{eq:psip}
\begin{split}
    \iint_{\mathbf{R}^2} \psi^2(x, y) \, dx \, dy &= \int^{+\infty}_0 \int^{2\pi}_0 \Psi^2(r)r \, dr \, d\theta \\
      &= 2\pi \int^{+\infty}_0 \Psi^2(r) r \, dr = 1.
\end{split}
\end{equation}
Here we note that the second term on right-hand side of the equation \eqref{eq:s2} is
\begin{equation}
    \hat{x}^2_a\hat{p}^2_b - \hat{p}^2_a\hat{x}^2_b = -x^2 \frac{\partial^2}{\partial y^2} + y^2 \frac{\partial^2}{\partial x^2},
\end{equation}
and for our spherically symmetric wave function $\psi(x, y)$ we have
\begin{equation}
\begin{split}
    &\iint_{\mathbf{R}^2}\psi\left(-x^2 \frac{\partial^2 \psi}{\partial y^2} + y^2 \frac{\partial^2 \psi}{\partial x^2}\right) \, dx \, dy \\
    &= \iint_{\mathbf{R}^2}\psi\left(x^2 \left(\frac{\partial \psi}{\partial y}\right)^2 - y^2 \left(\frac{\partial \psi}{\partial x}\right)^2\right) \, dx \, dy = 0,
\end{split}
\end{equation}
which can be easily obtained with the help of Eqs.~\eqref{eq:xy}. This means that in the case of a real spherically symmetric wave function we simply have 
\begin{equation}\label{eq:sZ}
    \sigma_{xp} \sigma_{px} = \langle\hat{Z}\rangle.
\end{equation}
In this step, our original problem has been further reduced to the problem of minimizing the integral \eqref{eq:Z2} provided that 
the function $\Psi(r)$ satisfies the normalization condition \eqref{eq:psip}.

\textit{Third step.} To simplify the integral \eqref{eq:Z2} and the normalization condition \eqref{eq:psip} let us introduce the function $f(r)$ via the relation
\begin{equation}\label{eq:psip2}
    \Psi(r) = \frac{1}{\sqrt{\pi}} f(r^2).
\end{equation}
The normalization condition \eqref{eq:psip} in terms of this function reads as
\begin{equation}
\begin{split}
    2\pi &\int^{+\infty}_0 \Psi^2(r) r \, dr = 2\pi \int^{+\infty}_0 \frac{1}{\pi} f^2(r^2) r \, dr \\
    &= \int^{+\infty}_0 f^2(r^2) \, dr^2 = \int^{+\infty}_0 f^2(r) \, dr = 1.
\end{split}
\end{equation}
We see that the function $f(r)$ is normalized in the ordinary sense. Now let us write the last integral of Eq.~\eqref{eq:Z2} in terms
of the function $f(r)$. For the derivative we have
\begin{equation}\label{eq:psipd}
    \frac{d \Psi}{dr} = \frac{2}{\sqrt{\pi}} r f'(r^2),
\end{equation}
and, substituting this expression into Eq.~\eqref{eq:Z2}, we get
\begin{equation}\label{eq:Z3}
\begin{split}
    \langle\hat{Z}\rangle &= \frac{\pi}{4} \int^{+\infty}_0 r^3 \left(\frac{d \Psi}{d r}\right)^2 \, dr = \int^{+\infty}_0 r^5 f^{\prime 2}(r^2) \, dr \\
    &= \frac{1}{2}\int^{+\infty}_0 r^4 f^{\prime 2}(r^2) \, dr^2 = \frac{1}{2}\int^{+\infty}_0 r^2 f^{\prime 2}(r) \, dr.
\end{split}
\end{equation}
In this step, we have formulated our problem as follows: find the minimum of the integral
\begin{equation}\label{eq:rf}
    \langle\hat{Z}\rangle = \frac{1}{2}\int^{+\infty}_0 r^2 f^{\prime 2}(r) \, dr
\end{equation}
provided that the function $f(r)$ is normalized
\begin{equation}\label{eq:f}
    \int^{+\infty}_0 f^2(r) \, dr = 1.
\end{equation}
In other words, we need to minimize the functional \eqref{eq:rf} over the set of normalized functions. 
Intuitively it seems to be clear that the minimum is not attained because function for which the integral \eqref{eq:rf}
is small should have the shape of a peak located near the origin $r=0$. Outside the peak the function should be small so that it is nearly constant and thus
the derivative is almost zero, and the more narrow the peak, the smaller the region where $f'(r)$ is nonzero. On the other hand, the function $f$ must 
be normalized, and the more narrow the peak the higher it must be, and then one can expect the the derivative inside the peak is large.
The range of $r$ where $f'(r)$ is nonzero is small but the value of the derivative is large, so that the function must have a special shape for the combination
of these two competing features to give the smallest value of the integral \eqref{eq:rf}.
So, we expect to find a parametrized family of functions which represent infinitesimally narrow and infinitely high peaks of some special form when the parameter varies.

\textit{Fourth step.} Let us show that the value of the right-hand side of Eq.~\eqref{eq:rf} cannot be smaller than $1/8$. 
To prove this, we first compute the integral
\begin{equation}
\begin{split}
    \int^{+\infty}_0 &r f(r) f'(r) \, dr = \frac{1}{2} \left. r f(r)^2 \right|^{+\infty}_0 \\
    &- \frac{1}{2}\int^{+\infty}_0 f^2(r) \, dr = -\frac{1}{2}.
\end{split}
\end{equation}
If we apply the Cauchy-Schwarz inequality 
\begin{equation}\label{eq:cs}
    \left(\int^b_a f(r) g(r) \, dr\right)^2 \leqslant \int^b_a f^2(r) \, dr \int ^b_a g^2(r) \, dr,
\end{equation}
which is valid for all real integrable functions $f(r)$ and $g(r)$ on an interval $[a, b]$, $-\infty \leqslant a < b \leqslant +\infty$, 
to the functions $f(r)$ and $g(r) = r f'(r)$ on the interval $[0, +\infty)$, we get
\begin{equation}
    \frac{1}{4} \leqslant \int^{+\infty}_0 f^2(r) \, dr \int^{+\infty}_0 r^2 f^{\prime 2}(r) \, dr,
\end{equation}
and the normalization condition \eqref{eq:f} gives us the desired inequality
\begin{equation}\label{eq:rf2}
    \langle\hat{Z}\rangle = \frac{1}{2}\int^{+\infty}_0 r^2 f^{\prime 2}(r) \, dr \geqslant \frac{1}{8}.
\end{equation}
Note that, in fact, we have just proved the inequality
\begin{equation}\label{eq:rff}
    \|r f'\|^2 \geqslant \frac{1}{4} \|f\|^2,
\end{equation}
where we use the standard notation for the norm and scalar product of two square-integrable functions $f, g \in L(0, +\infty)$
\begin{equation}
\begin{split}
    (f, g) &= \int^{+\infty}_0 f(r) g(r) \, dr, \\
    \|f\|^2 &= (f, f) = \int^{+\infty}_0 f^2(r) \, dr.
\end{split}
\end{equation}
This notation and the inequality \eqref{eq:rff} will be used for the multipartite generalizations of the results obtained here.

The Cauchy-Schwarz inequality \eqref{eq:cs} becomes equality if and only if the functions $f(r)$ and $g(r)$ are linearly dependent
on the interval $[a, b]$. In our case this condition reads as $r f'(r) + \lambda f(r) = 0$ for all $r \geqslant 0$. 
The general solution of this equation is $f(r) = C r^{-\lambda}$.
One can easily see that such a function is not normalizable on the interval $[0, +\infty)$. This means
that there is no normalized functions $f(r)$ that minimize the inequality \eqref{eq:rf2}. The problem now is to construct such functions $f(r)$
that approach to the lower bound $1/8$ as close as possible. As a hint in this construction we look for functions for which there is a linear 
combination of $r f'(r)$ and $f(r)$ that is small in some sense. 

\textit{Fifth step.} Using the Laguerre functions
\begin{equation}
    l_n(r) = L_n(r) e^{-r/2}, \quad n = 0, 1, \ldots,
\end{equation}
which are complete and orthonormal on the interval $[0, +\infty)$ \cite{szego}, where $L_n(r)$ are Laguerre polynomials \cite{gr-L}
\begin{equation}
    L_n(r) = \frac{e^r}{n!} \frac{d^n}{dr^n} (r^n e^{-r}),
\end{equation}
we can write any function $f(r)$, normalizable on $[0, +\infty)$, as a linear combination of the Laguerre functions
\begin{equation}\label{eq:f2}
    f(r) = \sum^{+\infty}_{n=0} d_n l_n(r).
\end{equation}
The normalization of $f(r)$ and orthonormality of $\{l_n(r)\}$ give us the condition 
\begin{equation}\label{eq:d}
    \sum^{+\infty}_{n=0} d^2_n = 1.
\end{equation}
From the relations for Laguerre polynomials \cite{gr-laguerre} one can derive the following equality:
\begin{equation}
    r l'_n(r) = -\frac{1}{2}(n l_{n-1}(r) + l_n(r) - (n+1)l_{n+1}(r)), 
\end{equation}
which is valid for $n \geqslant 0$ (for $n=0$ assume that $l_{-1}(r) = 0$). From this equality, for a general function \eqref{eq:f2} one can easily obtain the relation
\begin{equation}\label{eq:rfp}
    r f'(r) = -\frac{1}{2} \sum^{+\infty}_{n=0} (-n d_{n-1} + d_n + (n+1)d_{n+1}) l_n(r).
\end{equation}
Using the orthonormality of $\{l_n(r)\}$ one can get
\begin{equation}
\begin{split}
    \int^{+\infty}_0 r^2 f^{\prime 2}(r) \, dr &= \frac{1}{4}\sum^{+\infty}_{n=0} (d_n + (n+1)d_{n+1}-n d_{n-1})^2 \\
    &= \frac{1}{2} + \frac{1}{2}R(d_0, d_1, d_2, \ldots),
\end{split}
\end{equation}
where $R = R(d_0, d_1, d_2, \ldots)$ is the quadratic form 
\begin{equation}\label{eq:R}
    R = \sum^{+\infty}_{n=0}\bigl(n(n+1)d^2_n - (n+1)(n+2)d_n d_{n+2}\bigr).
\end{equation}
From the inequality \eqref{eq:rf2} we can conclude that $R \geqslant -1/2$ provided that the equality \eqref{eq:d} is satisfied.
Going back to Eqs.~\eqref{eq:sZ} and \eqref{eq:Z3}, we have
\begin{equation}\label{eq:sigmaR}
    \sigma_{xp} \sigma_{px} = \langle\hat{Z}\rangle = \frac{1}{2}\int^{+\infty}_0 r^2 f^{\prime 2}(r) \, dr = \frac{1}{4} + \frac{1}{4}R.
\end{equation}
We thus reduced our minimization problem to the minimization problem of the quadratic form \eqref{eq:R}.

\textit{Sixth step.} In this step we construct a parametrized family of states for which $R$ tends to $-1/2$, and thus $\sigma_{xp} \sigma_{px}$ tends to $1/8$, but this minimum
is not attained. Computing the eigenvectors corresponding to the minimal eigenvalue of the truncated matrices of the quadratic $R$ one
can observe that the components with odd indices are zero, so let us set $d_{2n+1} = 0$, $d_{2n} = c_n$, 
$n \geqslant 0$. Then the form \eqref{eq:R} becomes
\begin{equation}\label{eq:R2}
    R = \sum^{+\infty}_{n=0}\bigl(2n(2n+1)c^2_n - 2(n+1)(2n+1)c_n c_{n+1}\bigr).
\end{equation}
This is very similar to the quadratic form $Q$ defined by Eq.~\eqref{eq:Q}, except for the
additional factor 2 in the second term under the summation sign. One can try the same approach we used for
the form $Q$, but in this case it does not give us the optimal state, so another idea is needed.

From Eq.~\eqref{eq:rfp} we can obtain
\begin{equation}\label{eq:rfp2}
\begin{split}
    r f'(r) &+ \frac{1}{2}f(r) = \frac{1}{2} \sum^{+\infty}_{n=0} (n d_{n-1} - (n+1)d_{n+1}) l_n(r) \\
    &= \frac{1}{2} \sum^{+\infty}_{n=0} ((2n+1) c_n - (2n+2)c_{n+1}) l_{2n+1}(r).
\end{split}
\end{equation}
If the relation $c_{n+1}/c_n = (2n+1)/(2n+2)$ were valid for all $n \geqslant 0$, then we would have
linear dependence of $r f'(r)$ and $f(r)$ but, as we know, it is impossible for a normalized function $f(r)$.
And, in fact, one can easily check that the sequence $c_n$, defined by this ratio, cannot be normalized. But we can try to define the coefficients
via
\begin{equation}\label{eq:cn2}
    c_{n+1} = \frac{2n+1}{2n+2}\xi c_n,
\end{equation}
where $0 < \xi < 1$. From this relation one can derive an explicit expression for the coefficient $c_n$
\begin{equation}\label{eq:cn3}
    c_n = \binom{2n}{n} \left(\frac{\xi}{4}\right)^n c_0 = \frac{(2n-1)!!}{(2n)!!}\xi^n c_0
\end{equation}
for $n \geqslant 1$. The normalization reads as
\begin{equation}
    \left(1+\sum^{+\infty}_{n=1}\left(\frac{(2n-1)!!}{(2n)!!}\right)^2\xi^{2n}\right)c^2_0 = 
    \frac{2}{\pi}K(\xi)c^2_0 = 1,
\end{equation}
where $K(\xi)$ is the complete elliptic integral of the first kind \cite{gr-K}
\begin{equation}\label{eq:Kxi}
\begin{split}
    K(\xi) &= \int^{\pi/2}_0 \frac{d\theta}{\sqrt{1-\xi^2\sin^2\theta}} = \frac{\pi}{2}\sum^{+\infty}_{n=0} \binom{2n}{n}^2\left(\frac{\xi^2}{16}\right)^n\\
    &= \frac{\pi}{2}\left(1+\sum^{+\infty}_{n=1}\left(\frac{(2n-1)!!}{(2n)!!}\right)^2\xi^{2n}\right).
\end{split}
\end{equation}
We thus obtain the coefficient $c_0$
\begin{equation}\label{eq:c0}
    c_0 = \sqrt{\frac{\pi}{2K(\xi)}}.
\end{equation}
From Eq.~\eqref{eq:rfp2} we get
\begin{equation}
    r f'(r) + \frac{1}{2} f(r) = \frac{1-\xi}{2}\sum^{+\infty}_{n=0}(2n+1)c_n l_{2n+1}(r),
\end{equation}
and using orthonormality of $\{l_n(r)\}$ again we derive
\begin{equation}\label{eq:rfp3}
\begin{split}
    \int^{+\infty}_0 &\left(r f'(r) + \frac{1}{2} f(r)\right)^2 \, dr = \frac{(1-\xi)^2}{4}
    \sum^{+\infty}_{n=0}(2n+1)^2 c^2_n \\
    &= \frac{2E(\xi)-(1-\xi^2)K(\xi)}{4(1+\xi^2)K(\xi)},
\end{split}
\end{equation}
where $E(\xi)$ is the complete elliptic integral of the second kind \cite{gr-E}
\begin{equation}
\begin{split}
    E(\xi) &= \int^{\pi/2}_0\sqrt{1-\xi^2 \sin^2\theta} \, d\theta \\
    &= \frac{\pi}{2}\left(1-\sum^{+\infty}_{n=1}\left(\frac{(2n-1)!!}{(2n)!!}\right)^2\frac{\xi^{2n}}{2n-1}\right).
\end{split}
\end{equation}
Since $\lim_{\xi \to 1} E(\xi) = 1$ and $\lim_{\xi \to 1} K(\xi) = +\infty$, we can conclude that the integral 
on the left-hand side of the equality \eqref{eq:rfp3} tends to zero when $\xi \to 1$. One can say that the
linear combination $r f'(r) + (1/2) f(r)$ becomes smaller and tends to zero when $\xi \to 1$. This means that our choice of coefficients \eqref{eq:cn2} 
is a good candidate for the minimizing function. 

Now we can write the function $f(r)$ defined by Eq.~\eqref{eq:f2} as (remember that $d_{2n+1} = 0$ and $d_{2n} = c_n$)
\begin{equation}\label{eq:ff}
\begin{split}
    f(&r) = \sum^{+\infty}_{n=0} c_n L_{2n}(r) e^{-r/2} \\
    &= c_0 \left(1+\sum^{+\infty}_{n=1}\frac{\xi^n}{2^{2n-1}}\binom{2n-1}{n}L_{2n}(r)\right)e^{-r/2},
\end{split}
\end{equation}
where $c_0$ is given by Eq.~\eqref{eq:c0}. To simplify the sum in this expression, note that the Laguerre polynomial $L_n(r)$
can be represented as follows \cite{gr-laguerre2}:
\begin{equation}\label{eq:Lint}
    L_n(r) = e^r \int^{+\infty}_0 \frac{t^n}{n!}J_0(2\sqrt{rt}) e^{-t} \, dt,
\end{equation}
where $J_0(z)$ is the Bessel function of the first kind \cite{gr-J}
\begin{equation}
    J_0(z) = \sum^{+\infty}_{n=0} (-1)^n \frac{z^{2n}}{2^{2n}(n!)^2}.
\end{equation}
Substituting the integral representation \eqref{eq:Lint} into Eq.~\eqref{eq:ff} we get
\begin{equation}
    f(r) = c_0 e^{r/2} \int^{+\infty}_0 S(\xi, t) J_0(2\sqrt{rt})e^{-t}\, dt,
\end{equation}
where $S(\xi, t)$ is the sum
\begin{equation}
    S(\xi, t) = 1 + \sum^{+\infty}_{n=1} \frac{\xi^n}{2^{2n-1}}\binom{2n-1}{2n}\frac{t^{2n}}{(2n)!}.
\end{equation}
Note that we can simplify this sum as follows:
\begin{equation}\label{eq:S}
\begin{split}
    S(\xi, t) &= 1 + \sum^{+\infty}_{n=1} \frac{1}{2^{2n-1}}\frac{(2n-1)!}{n!(n-1)!}\frac{(t\sqrt{\xi})^{2n}}{(2n)!} \\
    &= \sum^{+\infty}_{n=0} \frac{(t\sqrt{\xi})^{2n}}{2^{2n}(n!)^2} = I_0(t\sqrt{\xi}),
\end{split}
\end{equation}
where $I_0(z) = J_0(iz)$ is the modified Bessel function of the first kind. Finally, we arrive at the following expression:
\begin{equation}\label{eq:fr}
    f(r) = \sqrt{\frac{\pi}{2K(\xi)}}e^{r/2} \int^{+\infty}_0 I_0(t\sqrt{\xi})J_0(2\sqrt{rt})e^{-t}\,dt.
\end{equation}
In Appendix X we prove that the exchange summation and integration in Eq.~\eqref{eq:ff} is legal. In this step, we have constructed a family of the functions 
\eqref{eq:fr}, parametrized by the real parameter $0 < \xi < 1$, that are good candidates for the functions that approach 
the equality in Eq.~\eqref{eq:rf2} as $\xi \to 1$.

\textit{Seventh step.} Now let us compute the quadratic form \eqref{eq:R2}. Substituting the coefficients \eqref{eq:cn3} into the expression \eqref{eq:R2}, we get
$R = (R_1 - R_2) c^2_0$, where $R_1$ and $R_2$ are defined via
\begin{equation}
\begin{split}
    R_1 &= \sum^{+\infty}_{n=1}\frac{(2n+1)!!}{(2n)!!}\frac{(2n-1)!!}{(2n-2)!!}\xi^{2n}, \\
    R_2 &= \xi \sum^{+\infty}_{n=0}\left(\frac{(2n+1)!!}{(2n)!!}\right)^2\xi^{2n}.
\end{split}
\end{equation}
The first sum can be simplified as follows:
\begin{equation}
\begin{split}
    R_1 &= \sum^{+\infty}_{n=0} 2n(2n+1) \left(\frac{(2n-1)!!}{(2n)!!}\right)^2\xi^{2n} \\
        &= \frac{2}{\pi}\xi \frac{d}{d\xi}\left(K(\xi) + \xi \frac{dK(\xi)}{d\xi}\right) \\
        &= \frac{2}{\pi}\frac{(1+\xi^2) E(\xi) -(1-\xi^2)K(\xi)}{(1-\xi^2)^2}.
\end{split}
\end{equation}
The second sum can be computed analogously
\begin{equation}
\begin{split}
    R_2 &= \xi \sum^{+\infty}_{n=0}(2n+1)^2\left(\frac{(2n-1)!!}{(2n)!!}\right)^2\xi^{2n} \\
        &= \frac{2}{\pi}\xi \left(\xi \frac{d}{d\xi}+1\right)\left(K(\xi)+\xi\frac{dK(\xi)}{d\xi}\right) \\
        &= \frac{2}{\pi}\frac{\xi (2E(\xi)-(1-\xi^2)K(\xi))}{(1-\xi^2)^2}.
\end{split}
\end{equation}
We finally obtain
\begin{equation}\label{eq:R3}
    R = -\frac{1}{1+\xi} + \frac{1}{(1+\xi)^2}\frac{E(\xi)}{K(\xi)}.
\end{equation}
As Fig.~\ref{fig:R} illustrates, $R < 0$ for all $0 < \xi < 1$. From the expression \eqref{eq:R3} we also see 
that $R \to -1/2$ when $\xi \to 1$. In this step we have proved that with the choice of the coefficients \eqref{eq:cn2}
the form \eqref{eq:R2} tends to its minimal value $-1/2$, but does not attain it.

\begin{figure}
\includegraphics[scale=0.8]{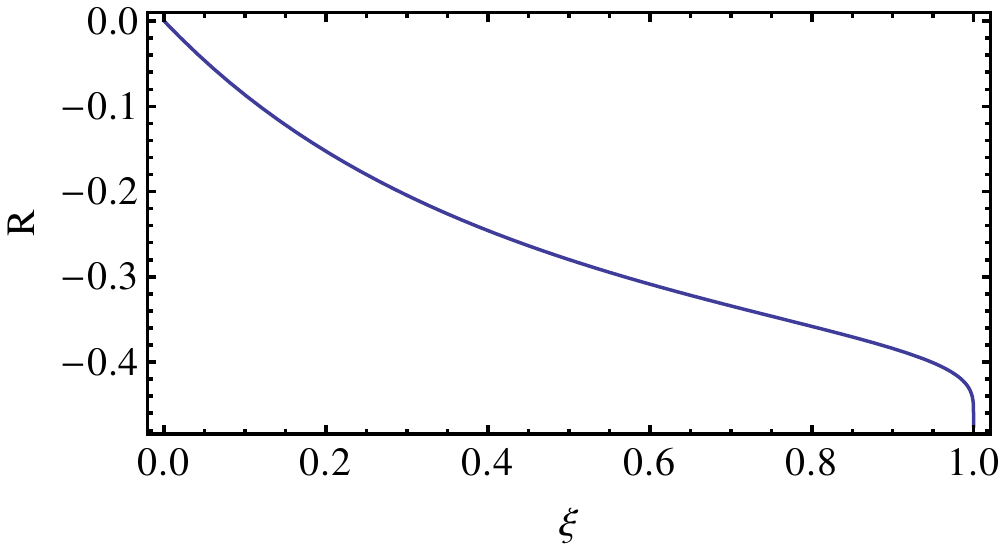}
\caption{The quantity $R$ defined by Eq.~\eqref{eq:R3}.}\label{fig:R}
\end{figure}

\textit{Eighth step.} Having proved that the expression \eqref{eq:fr} gives us the desired family of functions, we now simplify this expression 
and transform it into a proper integral. To do it, we need the following standard theorem that can be found, for example, in \cite{budak}.
\begin{thrm}
Let $F(t,\theta)$ be a continuous function of two arguments on the set $[t_0, +\infty) \times [\theta_0, \theta_1]$. If the integral
\begin{equation}\label{eq:Fu}
    \tilde{F}(\theta) = \int^{+\infty}_{t_0} F(t, \theta) \, dt
\end{equation}
converges uniformly on the interval $[\theta_0, \theta_1]$, then the function $\tilde{F}(\theta)$ is integrable
on this interval and the following equality is valid:
\begin{equation}
\begin{split}
    \int^{\theta_1}_{\theta_0} \tilde{F}(\theta) \, d\theta &\equiv \int^{\theta_1}_{\theta_0} d\theta 
    \int^{+\infty}_{t_0} F(t, \theta) \, dt\\
    &= \int^{+\infty}_{t_0}dt\int^{\theta_1}_{\theta_0} F(t, \theta) \, d\theta.
\end{split}
\end{equation}
In other words, under these conditions one can change the order of integration.
\end{thrm}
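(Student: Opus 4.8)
The plan is to reduce the improper integral to a limit of proper parameter integrals over finite $t$-intervals, where the elementary Fubini theorem for continuous functions on a compact rectangle applies, and then to transfer the interchange to the improper case by passing to the limit under the $\theta$-integral. The uniform convergence hypothesis is exactly what legitimizes that passage.

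First I would introduce the truncated integral
\begin{equation}
    \tilde{F}_b(\theta) = \int^b_{t_0} F(t, \theta) \, dt, \quad b > t_0.
\end{equation}
Since $F$ is continuous on the compact rectangle $[t_0, b] \times [\theta_0, \theta_1]$, the standard theorem on integrals depending on a parameter shows that $\tilde{F}_b(\theta)$ is a continuous function of $\theta$ on $[\theta_0, \theta_1]$. The uniform-convergence hypothesis means precisely that $\tilde{F}_b(\theta) \to \tilde{F}(\theta)$ uniformly on $[\theta_0, \theta_1]$ as $b \to +\infty$. Because a uniform limit of continuous functions is continuous, $\tilde{F}(\theta)$ is continuous, hence Riemann integrable on $[\theta_0, \theta_1]$; this already establishes the integrability assertion of the theorem.

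Next, for each finite $b$ the function $F$ is continuous on the compact rectangle, so the elementary Fubini theorem permits swapping the two proper integrals:
\begin{equation}
\begin{split}
    \int^{\theta_1}_{\theta_0} \tilde{F}_b(\theta) \, d\theta &= \int^{\theta_1}_{\theta_0} d\theta \int^b_{t_0} F(t, \theta)\, dt \\
    &= \int^b_{t_0} dt \int^{\theta_1}_{\theta_0} F(t, \theta)\, d\theta.
\end{split}
\end{equation}
It remains to let $b \to +\infty$ on both ends. On the left, uniform convergence of $\tilde{F}_b$ to $\tilde{F}$ over the bounded interval $[\theta_0, \theta_1]$ permits passing the limit inside the integral, giving $\int^{\theta_1}_{\theta_0} \tilde{F}(\theta)\, d\theta$. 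On the right, the limit is by definition the improper integral $\int^{+\infty}_{t_0} dt \int^{\theta_1}_{\theta_0} F(t, \theta)\, d\theta$. Equating the two limits yields the claimed identity.

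The only delicate point, and the step that genuinely uses the uniform-convergence hypothesis, is the interchange of $\lim_{b \to +\infty}$ with $\int^{\theta_1}_{\theta_0}$ on the left-hand side. Without uniformity the pointwise limit $\tilde{F}$ need not even be integrable, and the integral of the limit could fail to equal the limit of the integrals. I would therefore isolate the key lemma: if $g_b \to g$ uniformly on a compact interval $[\theta_0, \theta_1]$, then $\int^{\theta_1}_{\theta_0} g_b \, d\theta \to \int^{\theta_1}_{\theta_0} g \, d\theta$, which follows at once from the estimate $|\int^{\theta_1}_{\theta_0}(g_b - g)\, d\theta| \leqslant (\theta_1 - \theta_0) \sup_\theta |g_b(\theta) - g(\theta)|$ together with the finiteness of $\theta_1 - \theta_0$. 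Everything else is standard, and I expect no further obstacle.
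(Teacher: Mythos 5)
Your proposal is correct and complete: the truncation $\tilde{F}_b(\theta) = \int^b_{t_0} F(t,\theta)\,dt$, Fubini on the compact rectangle, and the passage to the limit $b \to +\infty$ under the $\theta$-integral justified by uniform convergence is exactly the right argument, and you correctly isolate the one step where uniformity is indispensable. Note, however, that the paper itself does not prove this statement at all --- it is quoted as a standard fact with a citation to the textbook of Budak and Fomin --- so there is no in-paper proof to compare against; your argument is the standard textbook proof that the cited reference supplies, and it fills that gap correctly.
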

Note that the Bessel function $I_0(z)$ can be represented as follows \cite{gr-I}:
\begin{equation}
    I_0(z) = \frac{1}{\pi} \int^{\pi}_0 e^{-z \cos\theta} \, d\theta.
\end{equation}
The Theorem 2 can be used to substitute this expression into the equation \eqref{eq:fr} and exchange the
order of integration. In fact, we have 
\begin{equation}
    f(r) = \frac{1}{\sqrt{2\pi K(\xi)}} e^{r/2} \int^{+\infty}_0 dt \int^{\pi}_0 F(t, \theta) \, d\theta,
\end{equation}
where
\begin{equation}
    F(t, \theta) = e^{-(1+\sqrt{\xi}\cos\theta)t}J_0(2\sqrt{rt}).
\end{equation}
To show that the integral \eqref{eq:Fu} converges uniformly on the interval $[0, \pi]$ for any
fixed $0 < \xi < 1$ note that
\begin{equation}
    F(t, \theta) = e^{-(1+\cos\theta)\sqrt{\xi}t} e^{-(1-\sqrt{\xi})t} J_0(2\sqrt{rt}).
\end{equation}
The function $e^{-(1-\sqrt{\xi})t} J_0(2\sqrt{rt})$ is integrable on $[0, +\infty)$ 
due to the equality \cite{gr-J0}
\begin{equation}
    \int^{+\infty}_0 e^{-at} J_0(b\sqrt{t}) \, dt = \frac{e^{-b^2/(4a)}}{a},
\end{equation}
valid for $a>0$. For $a=0$ this function is not integrable. We have that the integral
\begin{equation}
    \int^{+\infty}_0 e^{-(1-\sqrt{\xi})t} J_0(2\sqrt{rt}) \, dt
\end{equation}
converges uniformly with respect to $\theta$, since it does not depend on $\theta$, and 
the function $e^{-(1+\cos\theta)\sqrt{\xi}t}$ is uniformly bounded by 1. We conclude 
that the integral \eqref{eq:Fu} converges uniformly and exchanging the order of integration
is legal. We then obtain the following expression for the function $f(r) \equiv f_\xi(r)$:
\begin{equation}\label{eq:fe}
\begin{split}
    f_\xi(r) &= \frac{1}{\sqrt{2\pi K(\xi)}} e^{r/2} \int^{\pi}_0 d\theta \int^{+\infty}_0 F(t, \theta) \, dt \\
    &=\frac{1}{\sqrt{2\pi K(\xi)}} \int^{\pi}_0 
    \frac{\exp\left(-\frac{1-\sqrt{\xi}\cos\theta}{1+\sqrt{\xi}\cos\theta}\frac{r}{2}\right)}{1+\sqrt{\xi}\cos\theta} \, d\theta.
\end{split}
\end{equation}
This integral (in fact, even more general one) can be found in tables \cite{gr-exp}, and we have
\begin{equation}\label{eq:fxi}
\begin{split}
    f_\xi(r) = \sqrt{\frac{\pi}{2K(\xi)(1-\xi)}} I_0\left(\frac{\sqrt{\xi}}{1-\xi}r\right) e^{-\frac{1+\xi}{1-\xi}\frac{r}{2}}.
\end{split}
\end{equation}
Let us show that the this function has finite right derivative at the origin. We differentiate the expression \eqref{eq:fxi} with respect to $r$
and take the limit of this derivative for $r \to 0$. This limit is then the right derivative $f'_+(0)$. We have
\begin{equation}
    f'_+(0) = -\sqrt{\frac{\pi}{8K(\xi)}} \frac{1+\xi}{\sqrt{(1-\xi)^3}},
\end{equation}
and this expression is finite for all $0 < \xi < 1$.

In this step we finally obtain the expression for the family of wave functions:
\begin{equation}\label{eq:psixi}
\begin{split}
    &\psi(x, y) = \Psi(\sqrt{x^2+y^2}) = \frac{1}{\sqrt{\pi}} f(x^2+y^2) \\
               &= \frac{1}{\pi \sqrt{2K(\xi)}}
               \int^{\pi}_0 \frac{\exp\left(-\frac{1-\sqrt{\xi}\cos\theta}{1+\sqrt{\xi}\cos\theta}\frac{x^2+y^2}{2}\right)}{1+\sqrt{\xi}\cos\theta}\, d\theta \\
               &= \frac{\exp\left(-\frac{1+\xi}{1-\xi}\frac{x^2+y^2}{2}\right)}{\sqrt{2K(\xi)(1-\xi)}} I_0\left(\frac{\sqrt{\xi}}{1-\xi}(x^2+y^2)\right).
\end{split}
\end{equation}
From the finiteness of the right derivative $f'_+(0)$ we can conclude that this wave function is smooth at the origin, since the derivative $df(r^2)/dr = 2r f'(r^2)$ 
is equal to zero for $r = 0$. The function $\Psi(r)$, defined by Eq.~\eqref{eq:psip2}, is shown in Fig.~\ref{fig:psi}. In fact, it has meaning only for 
$0 \leqslant r < +\infty$, but here it is illustrated for the whole real line. Note that the shape of these functions is as we have expected.

\begin{figure}[h]
\includegraphics[scale=0.8]{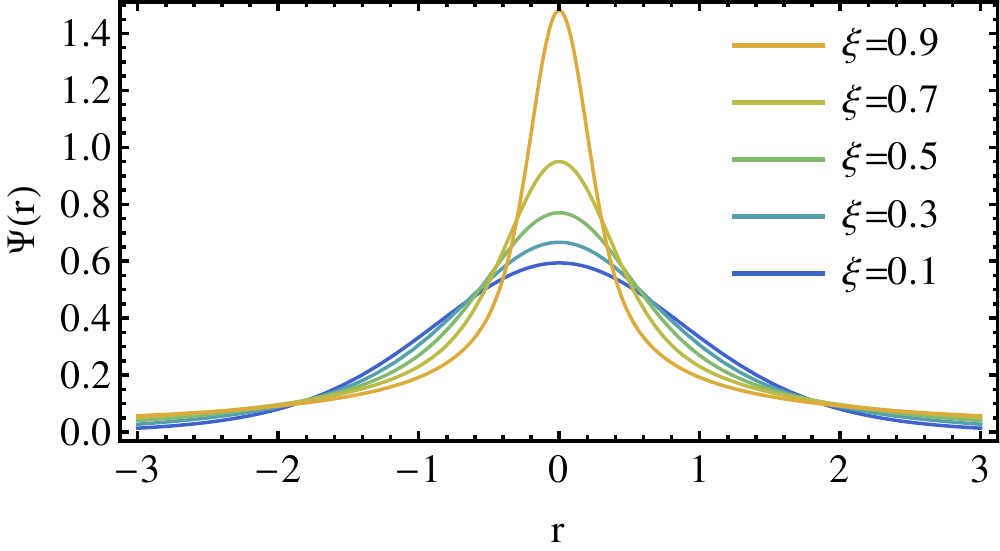}
\caption{The wave function $\Psi(r)$, defined by Eqs.~\eqref{eq:psip2} and \eqref{eq:fxi}, for a few different values of $\xi$.}\label{fig:psi}
\end{figure}

Using the integral representation of the wave function \eqref{eq:psixi}, we can compute the product $\sigma_{xp}\sigma_{px}$ according to the 
first equality of Eq.~\eqref{eq:s2}. Exchanging the orders of integration as we did before, we can arrive to the following expression for this product:
\begin{equation}
    \frac{1}{8\pi K(\xi)} \int^\pi_0 \int^\pi_0 \frac{(1-\xi \cos^2\theta)(1-\xi \cos^2\theta')}{(1-\xi \cos\theta \cos\theta')^3}
    \, d\theta \, d\theta'.
\end{equation}
This expression can be shown to agree with Eqs.~\eqref{eq:sigmaR} and \eqref{eq:R3}.
In this way we get directly from the definition the desired property of this wave function.
The bipartite state with the wave function \eqref{eq:psixi} we denote as $|\Psi_\xi\rangle$.
For all $0 < \xi < 1$ these states violate the inequality \eqref{eq:xppx} and when $\xi \to 1$, we have $\sigma_{xp} \sigma_{px} \to 1/8$, so the violation tends to 2.
In next sections, we study some simple properties of the states constructed and then extend this approach to a more general multipartite case.

\section{Properties} \label{sec:p}

In this section we compute the scalar product of the states $|\Psi_\xi\rangle$ with different values of the parameters $\xi$ and
find the decomposition of these states in the Fock basis. 

\subsection{Scalar product of states with different parameters} 

On can easily verify that the scalar product $\langle\Psi_\xi|\Psi_{\xi'}\rangle$ of the states $|\Psi_{\xi}\rangle$ and $|\Psi_{\xi'}\rangle$ 
can be computed as the scalar product of their corresponding functions $f_\xi(r)$ and $f_{\xi'}(r)$
\begin{equation}
    \langle\Psi_\xi|\Psi_{\xi'}\rangle = \int^{+\infty}_0 f_\xi(r) f_{\xi'}(r) \, dr.
\end{equation}

For the latter we have
\begin{equation}\label{eq:norm2}
\begin{split}
    &\int^{+\infty}_0 f_\xi(r) f_{\xi'}(r) \, dr = \frac{1}{2\pi \sqrt{K(\xi)K(\xi')}} 
    \int^{+\infty}_0 dr  \\
    &\int^{\pi}_0 d\theta \int^{\pi}_0 \frac{\exp\left(-\frac{1-\sqrt{\xi}\cos\theta}{1+\sqrt{\xi}\cos\theta}\frac{r}{2}
    -\frac{1-\sqrt{\xi'}\cos\theta'}{1+\sqrt{\xi'}\cos\theta'}\frac{r}{2}\right)}
    {(1+\sqrt{\xi}\cos\theta)(1+\sqrt{\xi'}\cos\theta')}
    \, d\theta'.
\end{split}
\end{equation}
In this expression the integrations over the angles $\theta$ and $\theta'$ are performed first and then the integration over $r$. Note that
\begin{equation}
\begin{split}
    &\frac{1}{2}\left(\frac{1-\sqrt{\xi}\cos\theta}{1+\sqrt{\xi}\cos\theta} 
    + \frac{1-\sqrt{\xi'}\cos\theta'}{1+\sqrt{\xi'}\cos\theta'}\right) \\
    &= \frac{1-\sqrt{\xi\xi'}\cos\theta\cos\theta'}{(1+\sqrt{\xi}\cos\theta)(1+\sqrt{\xi'}\cos\theta')} \\
    &\geqslant \frac{1-\sqrt{\xi\xi'}}{(1+\sqrt{\xi})(1+\sqrt{\xi'})} > 0.
\end{split}
\end{equation}
From this we can we can conclude that the order of the integrations can be changed and the integration over $r$ can be performed first and we get the equality
\begin{equation}
\begin{split}
    \int^{+\infty}_0 &f_\xi(r)f_{\xi'}(r) \, dr = \frac{1}{2\pi \sqrt{K(\xi)K(\xi')}} \\
    &\int^{\pi}_0 \int^{\pi}_0 \frac{d\theta \, d\theta'}{1 - \sqrt{\xi\xi'} \cos\theta \cos\theta'}.
\end{split}
\end{equation}
After the integration over $\theta$ we get
\begin{equation}
\begin{split}
    \int^{+\infty}_0 &f_\xi(r)f_{\xi'}(r) \, dr = \frac{1}{2\sqrt{K(\xi)K(\xi')}} \\
    &\int^{\pi}_0 \frac{d\theta'}{\sqrt{1-\xi\xi' \cos^2\theta'}}.
\end{split}
\end{equation}
Making substitution $z = \cos\theta'$, the last integral is transformed to the following one:
\begin{equation}\label{eq:norm3}
    \int^1_{-1} \frac{dz}{\sqrt{(1-\xi\xi' z^2)(1-z^2)}} = 2\int^1_0 \frac{dz}{\sqrt{(1-\xi\xi' z^2)(1-z^2)}},
\end{equation}
which, in turn, is transformed to $2K(\sqrt{\xi\xi'})$ by the substitution $z = \sin\theta$. We then obtain the desired scalar product
\begin{equation}
    \langle\Psi_\xi|\Psi_{\xi'}\rangle = \frac{K(\sqrt{\xi \xi'})}{\sqrt{K(\xi)K(\xi')}}.
\end{equation}
Note that for $\xi=\xi'$ we get the normalization condition $\langle\Psi_\xi|\Psi_\xi\rangle = 1$, as it must be.

\subsection{Fock states representation}

Now we derive the coefficients $c_{nm}$, $n,m \geqslant 0$, of the state $|\Psi_\xi\rangle$ in the Fock basis.
To do this, we need to find the scalar product of the state $|\Psi_\xi\rangle$ with the Fock state $|nm\rangle$, $c_{nm} = 
\langle nm|\Psi_\xi\rangle$ or, in other words, we need to compute the integral
\begin{equation}\label{eq:cnm}
    c_{nm} = \iint_{\mathbf{R}^2} \psi_n(x) \psi_m(y) \psi(x,y) \, dx \, dy, 
\end{equation}
where 
\begin{equation}
    \psi_n(x) = \frac{1}{\sqrt{\sqrt{\pi}2^nn!}} H_n(x) e^{-x^2/2}
\end{equation}
is the wave function of the Fock state $|n\rangle$. To compute this integral, we introduce the generating function
\begin{equation}\label{eq:F}
\begin{split}
    &F(x, y, u, v) = \psi_n(x) \psi_m(y) \frac{u^n v^m}{\sqrt{n! m!}} \\
            &= \frac{1}{\sqrt{\pi}} \exp\left(-\frac{x^2+y^2 + u^2 + v^2}{2} + \sqrt{2}(xu+yv)\right).
\end{split}
\end{equation}
This generating function is useful because it is much easier to compute the integral
\begin{equation}\label{eq:I2}
    I(u, v) = \iint_{\mathbf{R}^2} F(x, y, u, v) \psi(x, y) \, dx \, dy,
\end{equation}
and then expand it in $u$ and $v$ to find the coefficients $c_{nm}$. We have
\begin{equation}
\begin{split}
    I(u,v) &= \frac{e^{-\frac{u^2+v^2}{2}}}{\pi} \int^{+\infty}_0 r dr \\
           &\int^{2\pi}_0 f(r^2) e^{-r^2/2} e^{\sqrt{2}r(u \cos\theta + v \sin\theta)} \, d\theta.
\end{split}
\end{equation}
The inner integral (over $\theta$) can be easily taken by noting that $u \cos\theta + v \sin\theta = \sqrt{u^2+v^2}\cos(\theta+\theta_{u,v})$
and that due to periodicity of cosines function the shift $\theta_{u,v}$ plays no role in the integration over the period:
\begin{equation}
\begin{split}
    I(u,&v) = 2 e^{-\frac{u^2+v^2}{2}} \\
           &\int^{+\infty}_0 f(r^2) e^{-r^2/2} I_0(r\sqrt{2(u^2+v^2)})r \, dr = \\
           &e^{-\frac{u^2+v^2}{2}} \int^{+\infty}_0 f(r) e^{-r/2} I_0(\sqrt{2r(u^2+v^2)}) \, dr.
\end{split}
\end{equation}
We can substitute the expression \eqref{eq:fe} for the function $f(r)$ and get a repeated integral for $I(u, v)$. 
One can easily check, as it has been done before, that the order of integration can be changed and at the end we get
the following result:
\begin{equation}\label{eq:I}
    I(u,v) = \frac{1}{\sqrt{2\pi K(\xi)}} \int^{\pi}_0 e^{\frac{u^2+v^2}{2}\sqrt{\xi}\cos\theta} \, d\theta.
\end{equation}
The integral can be taken explicitly and expressed in terms of the Bessel function $I_0$, but the expression \eqref{eq:I}
is more convenient for our purpose. The exponent under the integral is the product of two exponents, one containing $u$ and the other
containing $v$. We can expand them in $u$ and $v$ respectively, multiply and integrate over $\theta$.
As one can see, we have to integrate powers of the cosines function. For these powers we have (see Eq.~\eqref{eq:sincos} below)
\begin{equation}
    \int^{\pi}_0 \cos^n \theta \, d\theta = \frac{\pi}{2^n} \binom{n}{n/2}
\end{equation}
if $n$ is even and the integral of odd powers are zero. From this, one can obtain
\begin{equation}\label{eq:I3}
    I(u,v) = \sqrt{\frac{\pi}{2 K(\xi)}} \sideset{}{'}\sum^{+\infty}_{n,m=0} \left(\frac{\sqrt{\xi}}{4}\right)^{n+m} 
    \binom{n+m}{\frac{n+m}{2}} \frac{u^{2n} v^{2m}}{n!m!},
\end{equation}
where the sum runs over all $n$ and $m$ with $n+m$ even. On the other hand, from Eqs.~\eqref{eq:cnm}, \eqref{eq:F} and \eqref{eq:I2} we have
\begin{equation}
    I(u, v) = \sum^{+\infty}_{n,m=0} c_{nm} \frac{u^n v^m}{\sqrt{n!m!}}.
\end{equation}
Comparing this with the expansion \eqref{eq:I3}, we obtain the coefficients $c_{nm}$
\begin{equation}\label{eq:cnm2}
    c_{nm} = \sqrt{\frac{\pi}{2 K(\xi)}} \sqrt{\binom{n}{\frac{n}{2}}\binom{m}{\frac{m}{2}}} \binom{\frac{n+m}{2}}{\frac{n+m}{4}} \left(\frac{\xi}{16}\right)^{\frac{n+m}{4}},
\end{equation}
where either $n = 4n'$, $m = 4m'$ or $n = 4n'+2$, $m = 4m'+2$, all other coefficients are zero. As an additional check, in Appendix \ref{app:cnm} we derive
directly from the expression \eqref{eq:cnm2} that the coefficients $c_{nm}$ satisfy the normalization condition.

\section{Generalization}\label{sec:g}

In this section we generalize our construction to an arbitrary even number $N = 2n$ of subsystems. We use the notation $\sigma^{(N)}_{xp} = \sigma_{x \ldots xp \ldots p}$
for the standard deviation of the operator $\hat{x}_1 \ldots \hat{x}_n\hat{p}_{n+1} \ldots \hat{p}_{2n}$ and, similarly, $\sigma^{(N)}_{px} = \sigma_{p \ldots px \ldots x}$
stands for the standard deviation of the operator $\hat{p}_1 \ldots \hat{p}_n\hat{x}_{n+1} \ldots \hat{x}_{2n}$. The following equalities take place:

\begin{equation}
\begin{split}
    \sigma^{(N)2}_{xp} &= \int_{\mathbf{R}^N} x^2_1 \ldots x^2_n \left(\frac{\partial^n \psi}{\partial x_{n+1} \ldots \partial x_N}\right)^2 \, d\vec{x}, \\
    \sigma^{(N)2}_{px} &= \int_{\mathbf{R}^N} x^2_{n+1} \ldots x^2_N \left(\frac{\partial^n \psi}{\partial x_1 \ldots \partial x_n}\right)^2 \, d\vec{x}.
\end{split}
\end{equation}
Here we again consider only spherically symmetric wave functions $\psi(x_1, \ldots, x_N) = \Psi(r)$, where, as before, we define
$r = \|\vec{x}\| = \sqrt{x^2_1 + \ldots + x^2_N}$. 
For the partial derivatives we have
\begin{equation}
    \frac{\partial \psi}{\partial x_i} = x_i \left(\frac{1}{r}\frac{d}{dr}\right) \Psi,
\end{equation}
and, in general, for distinct indices $i_1$, \ldots, $i_k$ we have
\begin{equation}
    \frac{\partial^k \psi}{\partial x_{i_1} \ldots \partial x_{i_k}} = x_{i_1} \ldots x_{i_k} \left(\frac{1}{r}\frac{d}{dr}\right)^k \Psi.
\end{equation}
For a spherically symmetric wave function Eq.~\eqref{eq:sZ} can be generalized as follows:
\begin{equation}
    \sigma^{(N)}_{xp}\sigma^{(N)}_{px} = \langle\hat{Z}^{(N)}\rangle,
\end{equation}
where the right-hand side is given by the integral
\begin{equation}
    \langle\hat{Z}^{(N)}\rangle = \int_{\mathbf{R}^N} x^2_1 \ldots x^2_N \left(\left(\frac{1}{r}\frac{d}{dr}\right)^n \Psi(r)\right)^2 \, d\vec{x}.
\end{equation}
To evaluate this integral we the spherical coordinates given by
\begin{equation}
\begin{split}
    x_1 &= r \cos\varphi_1, \\
    x_2 &= r \sin\varphi_1 \cos\varphi_2, \\
    x_3 &= r \sin\varphi_1 \cos\varphi_2 \cos\varphi_3, \\
     & \ldots \\
    x_{N-1} &= r \sin\varphi_1 \sin\varphi_2 \ldots \sin\varphi_{N-2} \cos\varphi_{N-1}, \\
    x_N &= r \sin\varphi_1 \sin\varphi_2 \ldots \sin\varphi_{N-2} \sin\varphi_{N-1},
\end{split}
\end{equation}
where $\varphi_1, \ldots, \varphi_{N-2} \in [0, \pi)$ and $\varphi_{N-1} \in [0, 2\pi)$. The Jacobian of this variable transformation is
\begin{equation}
    J = r^{N-1} \sin^{N-2}\varphi_1 \sin^{N-3}\varphi_2 \ldots \sin^2\varphi_{N-3} \sin\varphi_{N-2}.
\end{equation}
We then have the following equality:
\begin{equation}
\begin{split}
    &x^2_1 \ldots x^2_N J = r^{3N-1} \cos^2\varphi_1 \sin^{2(N-1)+N-2}\varphi_1 \\
    & \cos^2\varphi_2 \sin^{2(N-2)+N-3}\varphi_2 \ldots \cos^2\varphi_{N-1}\sin^2\varphi_{N-1}.
\end{split}
\end{equation}
The integrals over the angles can be easily taken explicitly \cite{gr-B} 
\begin{equation}\label{eq:sincos}
    \int^\pi_0 \sin^n\varphi \cos^m\varphi \, d\varphi = B\left(\frac{n+1}{2}, \frac{m+1}{2}\right),
\end{equation}
where $B(x, y)$ is the Euler beta-function. We can write
\begin{equation}\label{eq:b}
\begin{split}
    \int_\Omega x^2_1 &\ldots x^2_N J \, d\pmb{\varphi} = B\left(\frac{3}{2},\frac{3(N-1)}{2}\right) \\
    &B\left(\frac{3}{2},\frac{3(N-2)}{2}\right) \ldots B\left(\frac{3}{2},\frac{3 \cdot 2}{2}\right) \frac{\pi}{4},
\end{split}
\end{equation}
where $\Omega = [0, \pi)^{N-2} \times [0, 2\pi)$. Using the well-known relation for the beta-function
\begin{equation}
    B(x, y) = \frac{\Gamma(x)\Gamma(y)}{\Gamma(x+y)},
\end{equation}
the right-hand side of Eq.~\eqref{eq:b} can be simplified as follows:
\begin{equation}
    \int_\Omega x^2_1 \ldots x^2_N J \, d\pmb{\varphi} = \frac{\sqrt{\pi^N}}{2^{N-1}\Gamma\left(\frac{3N}{2}\right)}.
\end{equation}
This expression is valid for all $N \geqslant 1$, including odd values. In our case $N = 2n$ is even and this expression can be further simplified:
\begin{equation}
    \int_\Omega x^2_1 \ldots x^2_{2n} J \, d\pmb{\varphi} = \frac{\pi^n}{2^{2n-1}(3n-1)!}.
\end{equation}
Finally, for the averaged value $\langle\hat{Z}^{(N)}\rangle$ we have 
\begin{equation}\label{eq:Zn}
\begin{split}
    \langle\hat{Z}^{(N)}\rangle &= \frac{\pi^n}{2^{2n-1}(3n-1)!} \\
    &\int^{+\infty}_0 r^{6n-1} \left(\left(\frac{1}{r}\frac{d}{dr}\right)^n \Psi(r)\right)^2 \, dr.
\end{split}
\end{equation}
For $n = 1$ this expression coincides with Eq.~\eqref{eq:Z2}. From the normalization condition
\begin{equation}
    \int_{\mathbf{R}^N} \psi^2(x_1, \ldots, x_N) \, d\vec{x} = 1
\end{equation}
in the same way we can obtain the following equality:
\begin{equation}\label{eq:fn}
    \frac{2\pi^n}{(n-1)!} \int^{+\infty}_0 r^{2n-1} \Psi^2(r) \, dr = 1.
\end{equation}
For $n=1$ it coincides with Eq.~\eqref{eq:psip}. Let us introduce the function $f(r)$ via the relation
\begin{equation}\label{eq:fn2}
    \Psi(r) = \sqrt{\frac{n!}{\pi^n}} f(r^{2n}),
\end{equation}
which generalizes Eq.~\eqref{eq:psip2}. From Eq.~\eqref{eq:fn} we have
\begin{equation}
\begin{split}
    &\frac{2\pi^n}{(n-1)!}\frac{n!}{\pi^n} \int^{+\infty}_0 f^2(r^{2n}) r^{2n-1} \, dr \\
    &= \int^{+\infty}_0 f^2(r^{2n}) \, dr^{2n} = \int^{+\infty}_0 f^2(r) \, dr = 1.
\end{split}
\end{equation}
We see that the function $f(r)$ is normalized in the ordinary sense. Computation of the integral \eqref{eq:Zn}
requires more work. As one can easily check, we have
\begin{equation}\label{eq:a}
   \left(\frac{1}{r}\frac{d}{dr}\right)^n f(r^{2n}) = \sum^n_{k=1} a_k r^{2n(k-1)} f^{(k)}(r^{2n}),
\end{equation}
where the coefficients $a_k$ are to be determined. We can write the Taylor expansion of the function $f(r^{2n})$ as
\begin{equation}
    f(r^{2n}) = \sum^{+\infty}_{n=0} \frac{f^{(k)}(0)}{k!} r^{2nk},
\end{equation}
and apply the differential operator to this expansion
\begin{equation}\label{eq:a1}
\begin{split}
    \left(\frac{1}{r}\frac{d}{dr}\right)^n f(r^{2n}) &= \sum^{+\infty}_{k=1} \frac{f^{(k)}(0)}{k!} \frac{2^n (nk)!}{(n(k-1))!} r^{2n(k-1)} \\
    &= 2^n \sum^{+\infty}_{k=0} \frac{f^{(k+1)}(0)}{(k+1)!} \frac{n(k+1)!}{(nk)!}r^{2nk}.
\end{split}
\end{equation}
On the other hand, each of the $n$ derivatives on the right-hand side of Eq.~\eqref{eq:a} can also be expanded into its Taylor series 
and when these expansions are substituted into Eq.~\eqref{eq:a} we get
\begin{equation}\label{eq:a2}
    \sum^n_{k=1} \sum^{+\infty}_{j=0} a_k \frac{f^{k+j}(0)}{j!} r^{2n(k+j-1)}.
\end{equation}
The two expressions \eqref{eq:a1} and \eqref{eq:a2} must agree, so the coefficients of the powers $r^{2nk}$ must be equal 
for all $k = 0, 1, \ldots$. When we equate these coefficients for $k = 0, 1, \ldots, n-1$ we get the following system of equations:
\begin{equation}
    \sum^m_{k=1} \frac{a_k}{(m-k)!} = 2^n \frac{(nm)!}{m!(n(m-1))!} = \frac{2^n n!}{m!}\binom{nm}{n},
\end{equation}
for $m = 1, \ldots, n$. This system has the following matrix:
\begin{equation}\label{eq:A}
    A = 
    \begin{pmatrix}
        \frac{1}{0!} & 0 & 0 & \ldots & 0 \\
        \frac{1}{1!} & \frac{1}{0!} & 0 & \ldots & 0 \\
        \frac{1}{2!} & \frac{1}{1!} & \frac{1}{0!} & \ldots & 0 \\
        \hdotsfor{5} \\
        \frac{1}{(n-1)!} & \frac{1}{(n-2)!} & \frac{1}{(n-3)!} & \ldots & \frac{1}{0!}
    \end{pmatrix}.
\end{equation}
One can easily verify that the inverse matrix $A^{-1}$ is given by a similar expression
\begin{equation}\label{eq:A-1}
    A^{-1} = 
    \begin{pmatrix}
        \frac{1}{0!} & 0 & 0 & \ldots & 0 \\
        -\frac{1}{1!} & \frac{1}{0!} & 0 & \ldots & 0 \\
        \frac{1}{2!} & -\frac{1}{1!} & \frac{1}{0!} & \ldots & 0 \\
        \hdotsfor{5} \\
        \frac{(-1)^{n-1}}{(n-1)!} & \frac{(-1)^{n-2}}{(n-2)!} & \frac{(-1)^{n-3}}{(n-3)!} & \ldots & \frac{1}{0!}
    \end{pmatrix}.
\end{equation}
In fact, the elements of these matrices are
\begin{equation}
    A_{ij} = 
    \begin{cases}
        0 & i < j \\
        \frac{1}{(i-j)!} & i \geqslant j
    \end{cases}, \quad
    (A^{-1})_{ij} = 
    \begin{cases}
        0 & i < j \\
        \frac{(-1)^{i+j}}{(i-j)!} & i \geqslant j
    \end{cases}.
\end{equation}
The matrix element of the product is
\begin{equation}
    (A \cdot A^{-1})_{ij} = \sum^{n-1}_{k=0} A_{ik} (A^{-1})_{kj}.
\end{equation}
Only the terms with $i \geqslant k$ and $k \geqslant j$ are nonzero in this sum, so all terms above the main diagonal, i.e. the terms with $i < j$, are zero 
(the product of two lower triangular matrices is also lower triangular).
For $i \geqslant j$ we have
\begin{equation}
\begin{split}
    (A \cdot A^{-1})_{ij} &= \sum^i_{k=j} \frac{1}{(i-k)!}\frac{(-1)^{k+j}}{(k-j)!} \\
    &= \frac{1}{(i-j)!} \sum^{i-j}_{k=0} (-1)^k \binom{i-j}{k} = \delta_{ij},
\end{split}
\end{equation}
where the last equality is due to the well-known identity for the binomial coefficients \cite{gr-bin1}. We see that the product
of these two matrices is the identity matrix,
$A \cdot A^{-1} = E_n$, so the matrix $A^{-1}$ defined by Eq.~\eqref{eq:A-1} is indeed the inverse of the matrix \eqref{eq:A}.

We now can express the coefficients $a_k$ explicitly
\begin{equation}
\begin{split}
    a_k &= \sum^k_{j=1} \frac{(-1)^{j+k}}{(k-j)!} 2^n n! \frac{1}{j!} \binom{jn}{n} \\
        &= 2^n n! \frac{(-1)^k}{k!} \sum^k_{j=1} (-1)^j \binom{k}{j} \binom{j n}{n}.
\end{split}
\end{equation}
Now one can verify that coefficients of the powers $r^{2nk}$ in the expression \eqref{eq:a1} and \eqref{eq:a2} agree not only for $k = 0, 1, \ldots, n-1$,
but for all $k$.
When we substitute Eqs.~\eqref{eq:fn2}, \eqref{eq:a} into Eq.~\eqref{eq:Zn} we get an explicit expression for $\langle\hat{Z}^{(n)}\rangle$
in the form of the following functional:
\begin{equation}\label{eq:Zn2}
    \langle\hat{Z}^{(n)}\rangle = 3\frac{(n!)^3}{(3n)!} \int^{+\infty}_0 \left(\sum^n_{k=1}b_k r^k f^{(k)}(r)\right)^2 \, dr,
\end{equation}
where the coefficients $b_k$ read as
\begin{equation}\label{eq:b2}
    b_k = \frac{(-1)^k}{k!} \sum^k_{j=1} (-1)^j \binom{k}{j} \binom{j n}{n}.
\end{equation}
For $n=1$ Eq.~\eqref{eq:Zn2} coincides with Eq.~\eqref{eq:rf}.

First, we show that the integral on the right-hand side of Eq.~\eqref{eq:Zn2} is never zero for a square-integrable function $f(r)$.
In fact, if this integral is zero then $f(r)$ satisfies the following differential equation of the $n$th order on the interval $(0, +\infty)$:
\begin{equation}\label{eq:df}
    \sum^n_{k=1}b_k r^k f^{(k)}(r) = 0.
\end{equation}
This equation has $n$ linearly independent solutions. We can try to find them in the form $f(r) = r^\alpha$, where $\alpha$ must satisfy the equation
\begin{equation}\label{eq:alpha-eq}
    \sum^n_{k=1}b_k (\alpha)_k = 0,
\end{equation}
where $(\alpha)_k = \alpha(\alpha-1)\ldots(\alpha-k+1)$ is the Pochhammer symbol. One can easily verify that this equation has $n$ different roots
\begin{equation}\label{eq:alpha}
    \alpha = \frac{0}{n}, \frac{1}{n}, \ldots, \frac{n-1}{n}.
\end{equation}
This follows from the fact that if we take $f(r) = r^{j/n}$, then $f(r^{2n}) = r^{2k}$ and 
\begin{equation}
    \left(\frac{1}{r}\frac{d}{dr}\right)^n f(r^{2n}) = \left(\frac{1}{r}\frac{d}{dr}\right)^n r^{2k} = 0
\end{equation}
for $k = 0, 1, \ldots, n-1$. In Appendix \ref{app:roots} we prove directly from the expression
for the coefficients $b_k$, Eq.~\eqref{eq:b2}, that the numbers \eqref{eq:alpha} satisfy the equation \eqref{eq:alpha-eq}. 
We see that any solution of the equation \eqref{eq:df} has the form
\begin{equation}
    f(r) = C_{n-1} \sqrt[n]{r^{n-1}} + \ldots + C_1 \sqrt[n]{r} + C_0.
\end{equation}
None of these functions is normalizable on the interval $(0, +\infty)$.

Now we find the infimum of the operators $\langle\hat{Z}^{(n)}\rangle$ over the set of all normalized functions $f(r)$. It is rather a challenging task to
do it in general so we find the minimum for $n = 2, 3$ and develop a method which allows to do it in general. In both cases the functions obtained in the previous step
are used. In contrast to the bipartite case, here we could not find explicit analytical expressions for the solutions we obtain, but we developed 
a technique to prove that these solution are indeed optimal without having explicit expressions for them.

\subsection{Four-partite case}

For $N = 4$ we need to minimize the integral
\begin{equation}
    \langle\hat{Z}^{(4)}\rangle = \frac{1}{30} \int^{+\infty}_0 (r f'(r) + 2r^2 f^{\prime\prime}(r))^2 \, dr.
\end{equation}
We prove a more general
\begin{thrm}\label{thm:n2}
The equality 
\begin{equation}\label{eq:fa2}
    \inf_{\|f\| = 1} \|r f'+a r^2 f^{\prime\prime}\|^2 = \frac{1}{4}\left(\frac{3}{2}a-1\right)^2,
\end{equation}
is valid provided that $a \geqslant 1$. 
\end{thrm}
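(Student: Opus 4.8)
The plan is to diagonalize the quadratic form $\|rf'+ar^2f''\|^2$ by writing the differential operator as a polynomial in a single skew-adjoint operator. I introduce the Euler operator $D=r\,d/dr$ and its shift $A=D+\tfrac12=r\,d/dr+\tfrac12$. The integration by parts used in the fourth step (which gave $\int_0^{+\infty}rff'\,dr=-\tfrac12$) shows more generally that $A^*=-A$ on $L(0,+\infty)$, i.e. that $A$ is skew-adjoint. Since $D^2f=rf'+r^2f''$, a direct computation gives
\begin{equation}
    rf'+ar^2f''=aD^2f+(1-a)Df=aA^2f+(1-2a)Af+cf,
\end{equation}
with $c=\tfrac34a-\tfrac12=\tfrac12\bigl(\tfrac32a-1\bigr)$, so that the claimed infimum is exactly $c^2$. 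First I would record this rewriting, reducing the whole problem to the operator $L=aA^2+(1-2a)A+c$.

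The key step is a sum-of-squares identity for $\|Lf\|^2$. Because $A$ is skew-adjoint and real one has $(A^jf,A^kf)=(-1)^j(f,A^{j+k}f)$, so $(f,A^mf)=0$ for every odd $m$ while $(A^2f,f)=-\|Af\|^2$. Expanding $\|Lf\|^2=(Lf,Lf)$, all cross terms coupling even and odd powers of $A$ drop out, and I obtain the exact identity
\begin{equation}
    \|Lf\|^2=a^2\|A^2f\|^2+B\,\|Af\|^2+c^2\|f\|^2,\qquad B=(1-2a)^2-2ac.
\end{equation}
A short computation gives $B=\tfrac12(5a^2-6a+2)$, whose discriminant $36-40=-4$ is negative, so $B>0$ for every real $a$. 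Since $a^2\geqslant0$ and $B>0$, this identity yields at once $\|Lf\|^2\geqslant c^2\|f\|^2$, which is the lower bound in \eqref{eq:fa2}.

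It then remains to show that $c^2$ is really the infimum. Equality would force $Af=0$, i.e. $f=Cr^{-1/2}$, which is not normalizable on $(0,+\infty)$; hence, exactly as in the bipartite case, the infimum is not attained, and I must instead produce a normalized minimizing sequence. The cleanest route is the substitution $t=\ln r$, $g(t)=e^{t/2}f(e^t)$, a unitary map of $L(0,+\infty)$ onto $L(\mathbf{R})$ that carries $A$ into $d/dt$; by Plancherel $\|Lf\|^2$ becomes $\int_{\mathbf{R}}\bigl((c-a\omega^2)^2+(1-2a)^2\omega^2\bigr)|\hat g(\omega)|^2\,d\omega$, whose infimum over normalized $g$ equals $\min_{\omega\in\mathbf{R}}\bigl((c-a\omega^2)^2+(1-2a)^2\omega^2\bigr)=c^2$, approached by concentrating $\hat g$ near $\omega=0$. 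Alternatively one can reuse the explicit family of the sixth and eighth steps, for which $\|Af\|\to0$. The main obstacle is precisely this sharpness half, together with the analytic bookkeeping of checking that the minimizing sequence lies in the domain where all boundary terms from the integrations by parts vanish, since the lower bound itself is only an algebraic identity. The hypothesis $a\geqslant1$ is convenient here (it fixes the sign of $c=\tfrac12(\tfrac32a-1)$ and matches the explicit construction, with $a=2$ being the four-partite case), although the lower-bound identity in fact holds for all real $a$.
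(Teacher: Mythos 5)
Your proof is correct, and it takes a genuinely different route from the paper in both halves of the argument. For the lower bound, the paper pairs $rf'+ar^2f''$ against $rf'$ via the Cauchy--Schwarz inequality and then invokes \eqref{eq:rff}; you instead write the operator as the polynomial $aA^2+(1-2a)A+c$ in the skew-adjoint $A=r\,d/dr+\tfrac12$ and get the exact sum-of-squares identity
\begin{equation*}
    \|rf'+ar^2f''\|^2=a^2\|A^2f\|^2+\tfrac12\bigl(5a^2-6a+2\bigr)\|Af\|^2+c^2\|f\|^2,
    \qquad c=\tfrac12\bigl(\tfrac32a-1\bigr),
\end{equation*}
which yields the bound for every real $a$ because $5a^2-6a+2>0$. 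For sharpness, the paper solves the ODE \eqref{eq:g} in terms of incomplete Gamma functions and chases the norms of its solutions --- this is precisely where the hypothesis $a\geqslant1$ is used, to keep $(1-a)(1-2a)\geqslant0$ --- whereas you conjugate by the unitary substitution $t=\ln r$, $g(t)=e^{t/2}f(e^t)$, which carries $A$ to $d/dt$, and by Plancherel reduce the infimum to $\min_\omega\bigl[(c-a\omega^2)^2+(1-2a)^2\omega^2\bigr]=\min_{u\geqslant0}\bigl[a^2u^2+Bu+c^2\bigr]=c^2$, again because the same $B>0$. Your route buys three things: the restriction $a\geqslant1$ disappears (the theorem holds for all real $a$); no special functions are required; and since $r^k\,d^k/dr^k=D(D-1)\cdots(D-k+1)$ with $D=r\,d/dr$, the same Mellin--Plancherel argument applies to the general functional \eqref{eq:Zn2} --- for instance, for $N=6$ the symbol equals $-35/16$ at $\omega=0$ and its modulus squared is increasing in $\omega^2$, which recovers the value $35/4096$ of the paper's Theorem 4 without its delicate three-term estimates. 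What the paper's route buys is the explicit minimizing family $g_\xi$, which it recycles (at $a=3/2$) as the input to the six-partite construction. Two caveats: your parenthetical alternative of reusing the family of the sixth step, ``for which $\|Af\|\to0$'', is not sufficient as stated, since the identity also contains the term $a^2\|A^2f\|^2$, so one must additionally check $\|A^2f_\xi\|\to0$ (true, but it requires a computation); and your lower-bound identity is formal in exactly the same way as the paper's integrations by parts --- boundary terms must vanish on the relevant class of functions --- which you appropriately flag.
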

Another formulation of this theorem is the statement that for an arbitrary non-normalized (but normalizable) function $f$ we have the inequality
\begin{equation}\label{eq:fa2a}
    \|r f'+a r^2 f^{\prime\prime}\| > \frac{1}{2}\left(\frac{3}{2}a-1\right) \|f\|,
\end{equation}
and the coefficient on the right-hand side is the best possible (for $a \geqslant 1$).
\begin{proof}
Computing the scalar product
\begin{equation}
    (rf' + a r^2 f^{\prime\prime}, rf') = \left(1-\frac{3}{2}a\right) \|r f'\|^2,
\end{equation}
we can apply Cauchy-Schwarz inequality to get
\begin{equation}
    \|rf' + a r^2 f^{\prime\prime}\| \geqslant \left(\frac{3}{2}a-1\right) \|r f'\| > \frac{1}{2} \left(\frac{3}{2}a-1\right),
\end{equation}
where we used the inequality \eqref{eq:rff} and the normalization $\|f\| = 1$. We need to show that one can find such functions $f$ with $\|f\|=1$ that 
the difference between the left-hand and the right-hand side of the inequality \eqref{eq:fa2} becomes 
arbitrary small. We construct such functions with the help of the functions we obtained for $N = 2$. 

Let us take the function $f_\xi(r) \equiv f^{(2)}_\xi(r)$ 
defined by Eq.~\eqref{eq:fxi} and find a normalizable solution $g_\xi(r)$ of the equation
\begin{equation}\label{eq:g}
    (1-a)g_\xi + a r g'_\xi = f_\xi.
\end{equation}
To solve it, we first solve the following auxiliary equation, motivated by the integral representation \eqref{eq:fe}:
\begin{equation}\label{eq:G}
    (1-a)G + a r G' = e^{-\gamma r}, \quad \gamma = \frac{1}{2}\frac{1-\sqrt{\xi}\cos\theta}{1+\sqrt{\xi}\cos\theta}.
\end{equation}
Then, the corresponding solution of Eq.~\eqref{eq:g} is
\begin{equation}
    g_\xi(r) = \frac{1}{\sqrt{2\pi K(\xi)}} \int^\pi_0 \frac{G(r)}{1+\sqrt{\xi}\cos\theta} \, d\theta,
\end{equation}
where, in fact, $G(r)$ depends on both $\xi$ and $\theta$, since $\gamma$ in Eq.~\eqref{eq:G} depends on them.
The general solution of Eq.~\eqref{eq:G} is
\begin{equation}\label{eq:G2}
    G(r) = cr^{\frac{a-1}{a}} - \frac{1}{a} (\gamma r)^{\frac{a-1}{a}}\Gamma\left(-\frac{a-1}{a}, \gamma r\right),
\end{equation}
so a normalizable solution of Eq.~\eqref{eq:g} is given by
\begin{equation}\label{eq:g2}
    g_\xi(r) = -\frac{1}{a\sqrt{2\pi K(\xi)}} \int^\pi_0 \frac{(\gamma r)^{\frac{a-1}{a}}\Gamma\left(-\frac{a-1}{a}, \gamma r\right)}{1+\sqrt{\xi}\cos\theta} \, d\theta,
\end{equation}
where $\Gamma(\zeta, x)$ is the incomplete Gamma-function \cite{gr-gamma1}
\begin{equation}
    \Gamma(\zeta, x) = \int^{+\infty}_x e^{-t} t^{\zeta-1} \, dt.
\end{equation}
The solution \eqref{eq:G2} for $c=0$ can be also written as \cite{gr-gamma2}
\begin{equation}\label{eq:G3}
    G(r) = -\frac{e^{-\gamma r}}{a \Gamma\left(2-\frac{1}{a}\right)} \int^{+\infty}_0 \frac{e^{-t} t^{\frac{a-1}{a}}}{\gamma r + t} \, dt,
\end{equation}
so it is clear that the corresponding solution \eqref{eq:g2} is normalizable. We could not find an explicit expression for the 
function $g_\xi(r)$ itself or for its norm in terms of known 
special functions, but it is possible to find the limit of $\|g_\xi\|$ when $\xi \to 1$.

From Eq.~\eqref{eq:g} we obtain
\begin{equation}
    \quad r g'_\xi + a r^2 g^{\prime\prime}_\xi = r f'_\xi,
\end{equation}
and, since the solution $g_\xi(r)$ is normalizable, we have 
\begin{equation}\label{eq:h2}
\begin{split}
    (1-a)(1-2a)\|g_\xi\|^2 + a^2 \|r g'_\xi\|^2 &= \|f_\xi\|^2 = 1, \\
    (1-3a) \|r g'_\xi\|^2 + a^2 \|r^2 g^{\prime\prime}_\xi\|^2 &= \|r f'_\xi\|^2.
\end{split}
\end{equation}
To estimate $\|r^2 g^{\prime\prime}_\xi\|$, let us compute the scalar product
\begin{equation}
    (r^2 g^{\prime\prime}_\xi, r g'_\xi) = -\frac{3}{2} \|r g'_\xi\|^2,
\end{equation}
from which we get the inequality
\begin{equation}
    \|r^2 g^{\prime\prime}_\xi\| \geqslant \frac{3}{2} \|r g'_\xi\|,
\end{equation}
and from the second equality of Eq.~\eqref{eq:h2} we get
\begin{equation}
    \|r f'_\xi\|^2 \geqslant \left(1-3a+\frac{9}{4}a^2\right) \|r g'_\xi\|^2 = \left(\frac{3}{2}a-1\right)^2 \|r g'_\xi\|^2.
\end{equation}
Then, from this inequality, the first equality of Eq.~\eqref{eq:h2} and the inequality \eqref{eq:rff} we have
\begin{equation}
\begin{split}
    1 &\leqslant (4(1-a)(1-2a)+a^2) \|r g'_\xi\|^2 \\
    &= 4\left(\frac{3}{2}a-1\right)^2 \|r g'_\xi\|^2 \leqslant 4\|r f'_\xi\|^2.
\end{split}
\end{equation}
It is in this place that we use the condition $a \geqslant 1$ that
guarantees non-negativity of the coefficient $(1-a)(1-2a)$. 
By construction of the function $f_\xi$, we have $4\|r f'_\xi\|^2 \to 1$ when $\xi \to 1$, from which we immediately get 
\begin{equation}
    \|r g'_\xi\|^2 \to \frac{1}{4\left(\frac{3}{2}a-1\right)^2},
\end{equation}
and thus $\|g_\xi\|^2 \to \left(\frac{3}{2}a-1\right)^{-2}$ when $\xi \to 1$. Results of numerical integration of $g_\xi$ 
for several randomly chosen $a>1$ and for $\xi \approx 1$ agree with this conclusion (within the accuracy provided by the numerical integration routines). 
If we take the function
\begin{equation}\label{eq:gtilde}
    \tilde{g}_\xi(r) = \frac{g_\xi(r)}{\|g_\xi\|}
\end{equation}
then this function is normalized, $\|\tilde{g}_\xi\| = 1$ for all $\xi \in (0, 1)$, and we have
\begin{equation}
    \|r\tilde{g}'_\xi + a r^2 \tilde{g}^{\prime\prime}_\xi\|^2 = \frac{\|r f'_\xi\|^2}{\|g_\xi\|^2} \to \frac{1}{4}\left(\frac{3}{2}a-1\right)^2
\end{equation}
when $\xi \to 1$, which concludes the proof. 
\end{proof}

From this theorem, for $a = 2$, we get $\inf \langle\hat{Z}^{(4)}\rangle = 1/30$. As a minimizing family of functions we can take the functions 
$f^{(4)}_\xi(r) = \tilde{g}_\xi(r)$, where $\tilde{g}_\xi(r)$ is given by Eqs.~\eqref{eq:gtilde} and \eqref{eq:g2} for $a=2$. The corresponding
function $\Psi(r)$, Eq.~\eqref{eq:fn2}, is shown in Fig.~\ref{fig:psi2}.
The wave function $\psi(\vec{x})$ is given by
\begin{equation}
\begin{split}
    \psi(x_1, &x_2, x_3, x_4) = \Psi(r) = \frac{\sqrt{2}}{\pi} f^{(2)}_\xi(r^4) \\
    &= \frac{\sqrt{2}}{\pi} f^{(2)}_\xi((x^2_1+x^2_2+x^2_3+x^2_4)^2).
\end{split}
\end{equation}
The functions \eqref{eq:gtilde} for $a=3/2$, which we will denote by the same symbol $f^{(4)}_\xi(r)$, will be used in the next subsection
to construct a minimizing family of functions for the case $N=6$.

\begin{figure}[h]
\includegraphics[scale=0.8]{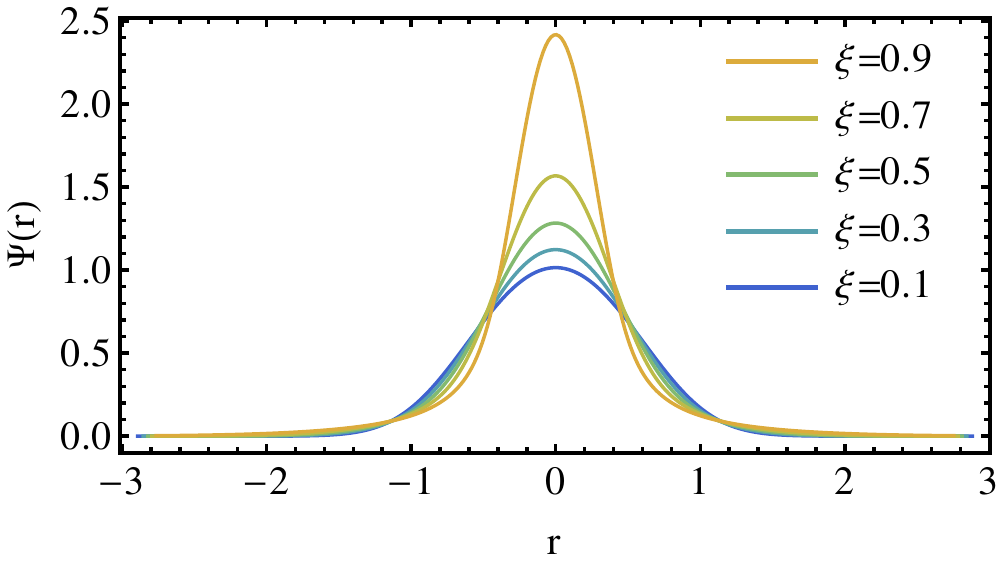}
\caption{The wave function $\Psi(r)$, defined by Eqs.~\eqref{eq:fn2} for $n=2$, \eqref{eq:gtilde} and \eqref{eq:g2} with $a=2$ for a few different values of $\xi$.}\label{fig:psi2}
\end{figure}

\subsection{Six-partite case}

For $N=6$ we have to minimize the integral
\begin{equation}\label{eq:ZZ3}
\begin{split}
    \langle\hat{Z}^{(6)}\rangle = \frac{1}{560} \int^{+\infty}_0 \Biggl(r f'(r) &+ 9r^2 f^{\prime\prime}(r) \Biggr. \\
    &+ \Biggl.\frac{9}{2}r^3 f^{\prime\prime\prime}(r)\Biggr)^2 \, dr.
\end{split}
\end{equation}
The solution is given by
\begin{thrm}
The following equality holds true:
\begin{equation}
    \inf \langle\hat{Z}^{(6)}\rangle = \frac{35}{4096}.
\end{equation}
A family of functions that approach this infimum can be constructed with the help of the functions obtained for the previous case $n=2$.
\end{thrm}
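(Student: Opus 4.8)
The plan is to exploit the same operator factorisation that underlies Theorem~\ref{thm:n2}. Writing $T=r\frac{d}{dr}$ and using the elementary identity $r^k\frac{d^k}{dr^k}=T(T-1)\cdots(T-k+1)$, the differential operator $Lf:=rf'+9r^2f''+\frac{9}{2}r^3f'''$ inside the integral \eqref{eq:ZZ3} is a polynomial in the single operator $T$. First I would verify that $Lf=\frac{9}{2}\,T(T-\tfrac13)(T-\tfrac23)\,f$; the three roots $0,\tfrac13,\tfrac23$ are exactly the exponents \eqref{eq:alpha} for $n=3$, so this factorisation is forced. The key structural fact is that $T$ obeys $(Tg,g)=-\tfrac12\|g\|^2$ for real normalizable $g$ (integration by parts, as in \eqref{eq:rff}), whence $((T-c)g,g)=-(c+\tfrac12)\|g\|^2$ and, by Cauchy--Schwarz, $\|(T-c)g\|\geq(c+\tfrac12)\|g\|$ for every $c\geq-\tfrac12$. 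I would also record the operator identity $\frac{9}{2}T(T-\tfrac13)(T-\tfrac23)=M_{3/2}\circ(3T-2)$, where $M_{3/2}f=rf'+\tfrac32 r^2f''$ is precisely the four-partite operator of Theorem~\ref{thm:n2} with $a=3/2$, and $3T-2$ is a first-order operator $D_3$ of the type $(1-a)+ar\frac{d}{dr}$ with $a=3$.

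For the lower bound I would apply the single-factor estimate three times in succession. Setting $v=(T-\tfrac23)f$ and $w=(T-\tfrac13)v$, one gets $\|v\|\geq\tfrac76\|f\|$, $\|w\|\geq\tfrac56\|v\|$ and $\|Tw\|\geq\tfrac12\|w\|$, so that $\|Lf\|=\frac{9}{2}\|Tw\|\geq\frac{9}{2}\cdot\frac12\cdot\frac56\cdot\frac76\|f\|=\frac{35}{16}\|f\|$. Dividing by the constant $560$ yields $\langle\hat{Z}^{(6)}\rangle\geq\frac{1}{560}\bigl(\frac{35}{16}\bigr)^2=\frac{35}{4096}$ for every normalized $f$, which is the claimed value.

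For attainment I would reuse the normalized four-partite family $f^{(4)}_\xi$ built for $a=3/2$, for which $\|M_{3/2}f^{(4)}_\xi\|^2\to\frac14\bigl(\frac32\cdot\frac32-1\bigr)^2=\frac{25}{64}$ by Theorem~\ref{thm:n2}. Solving $D_3 h_\xi=f^{(4)}_\xi$, i.e.\ $3rh'_\xi-2h_\xi=f^{(4)}_\xi$, for the normalizable solution (exactly as $g_\xi$ of \eqref{eq:g} was obtained through the integral representation \eqref{eq:g2} with the incomplete Gamma-function), the operator identity gives $Lh_\xi=M_{3/2}f^{(4)}_\xi$, so $\|Lh_\xi\|^2\to\frac{25}{64}$. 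It then remains to show $\|h_\xi\|^2\to\frac{4}{49}$, because then $\tilde h_\xi=h_\xi/\|h_\xi\|$ satisfies $\langle\hat{Z}^{(6)}\rangle=\frac{1}{560}\|Lh_\xi\|^2/\|h_\xi\|^2\to\frac{1}{560}\cdot\frac{25/64}{4/49}=\frac{35}{4096}$. To pin down $\|h_\xi\|^2$ I would establish a concentration-propagation estimate: if $D_a h=f$ then $p=(T+\tfrac12)h$ solves $D_a p=(T+\tfrac12)f$ (as $T$ commutes with $D_a$), and the first relation of \eqref{eq:h2} in its general form $(1-a)(1-2a)\|p\|^2+a^2\|rp'\|^2=\|(T+\tfrac12)f\|^2$ combined with $\|Tp\|\geq\tfrac12\|p\|$ gives $\|(T+\tfrac12)h\|\leq(\tfrac32a-1)^{-1}\|(T+\tfrac12)f\|$. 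Since $\|(T+\tfrac12)f^{(2)}_\xi\|^2=\|Tf^{(2)}_\xi\|^2-\tfrac14\to0$, this estimate applied with $a=3/2$ (the normalization $\|g_\xi\|\to\tfrac45$ stays away from zero) and then with $a=3$ forces $\|(T+\tfrac12)h_\xi\|\to0$. Finally $\|Th\|^2=\|(T+\tfrac12)h\|^2+\tfrac14\|h\|^2$ turns the first relation $10\|h_\xi\|^2+9\|Th_\xi\|^2=\|f^{(4)}_\xi\|^2=1$ into $\frac{49}{4}\|h_\xi\|^2+9\|(T+\tfrac12)h_\xi\|^2=1$, giving $\|h_\xi\|^2\to\frac{4}{49}$.

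The main obstacle is the attainment half, and within it the genuinely technical point is producing the explicit normalizable solution $h_\xi$ of $D_3 h_\xi=f^{(4)}_\xi$ and justifying the attendant manipulations (the integrations by parts that validate the scalar-product identities, and the interchange of limit $\xi\to1$ with integration) for this more deeply nested integral representation. The lower bound, by contrast, is an immediate threefold iteration of the Cauchy--Schwarz estimate already used in the bipartite and four-partite cases, and the norm limit $\|h_\xi\|^2\to\frac{4}{49}$ is then a short consequence of the concentration-propagation lemma once solvability is granted.
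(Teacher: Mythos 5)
Your proof is correct, and its skeleton --- a $\tfrac{35}{16}$ lower bound for the norm of the third-order operator in \eqref{eq:ZZ3}, plus attainment via the normalizable solution $h_\xi$ of $3rh'_\xi-2h_\xi=f^{(4)}_\xi$ built from Theorem~\ref{thm:n2} with $a=3/2$ --- is the same as the paper's; but both technical pillars are executed by genuinely different means. For the lower bound, the paper writes $rf'+9r^2f''+\tfrac92r^3f'''$ as $M_{3/2}(3rf'-2f)$ with $M_{3/2}g=rg'+\tfrac32r^2g''$, applies \eqref{eq:fa2a} once, and separately computes $\|2f-3rf'\|^2=10\|f\|^2+9\|rf'\|^2\geq\tfrac{49}{4}\|f\|^2$; you instead factor completely into commuting first-order pieces, $\tfrac92T(T-\tfrac13)(T-\tfrac23)$ with $T=r\,d/dr$, and iterate the elementary estimate $\|(T-c)g\|\geq(c+\tfrac12)\|g\|$ three times. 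The content is equivalent (the paper's grouping is your factorization half-assembled), but your version makes the constant $\tfrac92\cdot\tfrac12\cdot\tfrac56\cdot\tfrac76=\tfrac{35}{16}$ and the role of the exponents \eqref{eq:alpha} transparent. The substantive departure is the limit $\|h_\xi\|\to\tfrac27$: the paper derives the three-term identity \eqref{eq:fh}, finds that inserting the Cauchy--Schwarz bound leaves a negative coefficient, and repairs this with an ad hoc completion of squares whose coefficients $\tfrac98(9\pm\sqrt{74})$, $\tfrac34(24\pm\sqrt{74})$ come from solving a quadratic system; your contraction lemma --- since $T+\tfrac12$ commutes with $D_a=(1-a)+aT$, the identity underlying \eqref{eq:h2} applied to $p=(T+\tfrac12)h$ gives $\|(T+\tfrac12)h\|\leq(\tfrac32a-1)^{-1}\|(T+\tfrac12)f\|$ for $a\geq1$ --- replaces that trick by propagating the defect $\|(T+\tfrac12)\,\cdot\,\|$, which vanishes for $f^{(2)}_\xi$ by \eqref{eq:rfp3}, through the $a=3/2$ and $a=3$ solves, after which the exact relation $\tfrac{49}{4}\|h_\xi\|^2+9\|(T+\tfrac12)h_\xi\|^2=1$ yields the limit. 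Your route buys structural clarity and scalability: it explains why the nested construction works and would extend to larger $n$, where guessing the analogue of $(\alpha_0,\beta_0)$ becomes impractical. The price is the same one the paper pays throughout: the scalar-product identities and the square-integrability of the intermediate functions rest on integrations by parts with vanishing boundary terms, which you correctly flag and which can be checked from the explicit incomplete-Gamma representations, exactly as in the paper.
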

\begin{proof}
We have
\begin{equation}
\begin{split}
    r f' &+ 9r^2 f^{\prime\prime} + \frac{9}{2}r^3 f^{\prime\prime\prime} \\
    &= r(3rf'-2f)' + \frac{3}{2}r^2(3rf'-2f)^{\prime\prime},
\end{split}
\end{equation}
and from Eq.~\eqref{eq:fa2a} we get the inequality
\begin{equation}\label{eq:f3}
\begin{split}
    \|r f' + 9r^2 f^{\prime\prime} + \frac{9}{2}r^3 f^{\prime\prime\prime}\| &> \frac{1}{2}\left(\frac{3}{2}\cdot\frac{3}{2}-1\right)\|2f-3rf'\| \\
    &= \frac{5}{8}\|2f-3rf'\|.
\end{split}
\end{equation}
The value $\|2f-3rf'\|^2$ can be estimated as follows:
\begin{equation}
\begin{split}
    \|2f&-3rf'\|^2 = 10 \|f\|^2 + 9 \|rf'\|^2 \\
    &> \left(10+\frac{9}{4}\right)\|f\|^2 = \frac{49}{4}\|f\|^2 = \frac{49}{4},
\end{split}
\end{equation}
where we used the inequality \eqref{eq:rff}. Combining this result with the inequality \eqref{eq:f3} we get
\begin{equation}
    \|r f' + 9r^2 f^{\prime\prime} + \frac{9}{2}r^3 f^{\prime\prime\prime}\| > \frac{5}{8} \cdot \frac{7}{2} = \frac{35}{16}.
\end{equation}
We now prove that this estimation is the best possible.

Consider a normalizable solution of the equation
\begin{equation}\label{eq:h}
    -2h_\xi + 3r h'_\xi = f^{(4)}_\xi,
\end{equation}
where $f^{(4)}_\xi(r)$ is the function given by Eqs.~\eqref{eq:gtilde} and \eqref{eq:g2} for $a = 3/2$. This function $f^{(4)}_\xi(r)$ has the following integral representation:
\begin{equation}
    f^{(4)}_\xi(r) = N(\xi) \int^\pi_0 \frac{\sqrt[3]{\gamma r}\, \Gamma\left(-\frac{1}{3}, \gamma r\right)}{1+\sqrt{\xi}\cos\theta} \, d\theta,
\end{equation}
where $N(\xi)$ is the normalization such that $\|f^{(4)}_\xi\| = 1$. To solve Eq.~\eqref{eq:h} we first find a solution of the equation
\begin{equation}\label{eq:H2}
    -2H_\xi + 3r H'_\xi = \sqrt[3]{\gamma r}\, \Gamma\left(-\frac{1}{3}, \gamma r\right),
\end{equation}
and then construct from it the corresponding solution of Eq.~\eqref{eq:h} via
\begin{equation}\label{eq:hxi}
    h_\xi(r) = N(\xi) \int^\pi_0 \frac{H_\xi(r)}{1+\sqrt{\xi}\cos\theta} \, d\theta.
\end{equation}
A normalizable solution of Eq.~\eqref{eq:H2} reads as
\begin{equation}\label{eq:Hxi}
\begin{split}
    H_\xi(r) &= \frac{3}{2} e^{-\gamma r} - \frac{3}{2} \sqrt[3]{(\gamma r)^2} \, \Gamma\left(\frac{1}{3}, \gamma r \right) \\
    &- \sqrt[3]{\gamma r} \, \Gamma\left(-\frac{1}{3}, \gamma r \right),
\end{split}
\end{equation}
or, according to \cite{gr-gamma2}, this solution can also be written as
\begin{equation}
\begin{split}
    H_\xi(r) = \frac{3}{2} e^{-\gamma r} &- \frac{3}{2} \frac{\gamma r e^{-\gamma r}}{\Gamma\left(\frac{2}{3}\right)} 
    \int^{+\infty}_0 \frac{e^{-t}}{\sqrt[3]{t}(\gamma r + t)} \, dt \\
    &-\frac{e^{-\gamma r}}{\Gamma\left(\frac{4}{3}\right)} \int^{+\infty}_0 \frac{\sqrt[3]{t} e^{-t}}{\gamma r + t} \, dt,
\end{split}
\end{equation}
from which it is clear that this solution is normalizable.

We could not find an explicit expression for the function $h_\xi(r)$ itself or for its norm $\|h_\xi\|$ but we can prove
some statements about the behavior of the norm $\|h_\xi\|$ when $\xi \to 1$. From Eq.~\eqref{eq:h} we have
\begin{equation}\label{eq:fh4}
    rh'_\xi + 9r^2 h^{\prime\prime}_\xi + \frac{9}{2} r^3 h^{\prime\prime\prime}_\xi = r f^{(4)\prime}_\xi + \frac{3}{2} r^2 f^{(4)\prime\prime}_\xi.
\end{equation}
Taking the norm of the both sides of this equality, we get
\begin{equation}\label{eq:fh}
\begin{split}
    28 \|r h'_\xi\|^2 &- \frac{261}{2} \|r^2 h^{\prime\prime}_\xi\|^2 + \frac{81}{4} \|r^3 h^{\prime\prime\prime}_\xi\|^2 \\
    &= \left\|r f^{(4)\prime}_\xi + \frac{3}{2} r^2 f^{(4)\prime\prime}_\xi\right\|^2,
\end{split}
\end{equation}
where we used  the relations
\begin{equation}
\begin{split}
    (r h'_\xi, r^2 h^{\prime\prime}_\xi) &= -\frac{3}{2} \|r h'_\xi\|^2, \ (r^2 h^{\prime\prime}_\xi, r^3 h^{\prime\prime\prime}_\xi) = -\frac{5}{2} \|r^2 h^{\prime\prime}_\xi\|^2, \\
    (r h'_\xi, r^3 h^{\prime\prime\prime}_\xi) &= 6\|r h'_\xi\|^2 - \|r^2 h^{\prime\prime}_\xi\|^2.
\end{split}
\end{equation}
From the second of these relations we can get the following inequality:
\begin{equation}\label{eq:h23}
    \|r^3 h^{\prime\prime\prime}_\xi\| \geqslant \frac{5}{2} \|r^2 h^{\prime\prime}_\xi\|.
\end{equation}
If we substitute it into the left-hand side of Eq.~\eqref{eq:fh} we get
\begin{equation}
    28 \|r h'_\xi\|^2 - \frac{63}{16} \|r^2 h^{\prime\prime}_\xi\|^2,
\end{equation}
so the second coefficient is negative and we cannot follow the idea of the prove for $n=2$. Up to now we used relations like \eqref{eq:h23}
which are derived from Cauchy-Schwarz inequality and this inequality is based on the non-negativity of the expression $\|\alpha f + \beta g\|^2$
for all real numbers $\alpha$ and $\beta$. Here we need to go one step further and use the non-negativity of similar expression with three terms. Namely,
we can write
\begin{equation}
\begin{split}
    &\left\|\alpha r h'_\xi + \beta r^2 h^{\prime\prime}_\xi + \frac{9}{2} r^3 h^{\prime\prime\prime}_\xi\right\|^2 
    = (\alpha^2 - 3\alpha\beta + 54\alpha) \|r h'_\xi\|^2 \\ 
    &+ \left(\beta^2 - 9\alpha -\frac{45}{2}\beta\right) \|r^2 h^{\prime\prime}_\xi\|^2
    + \frac{81}{4} \|r^3 h^{\prime\prime\prime}_\xi\|^2 \geqslant 0.
\end{split}
\end{equation}
Then we find such $\alpha$ and $\beta$ that
\begin{equation}
\begin{split}
    \alpha^2 - 3\alpha\beta + 54\alpha &= 28-49\left(\frac{5}{8}\right)^2, \\
    \beta^2 - 9\alpha -\frac{45}{2}\beta &= -\frac{261}{2}.
\end{split}
\end{equation}
One can verify that this system has two solutions
\begin{equation}
    \alpha = \frac{9}{8}(9\pm\sqrt{74}), \quad \beta = \frac{3}{4}(24\pm\sqrt{74}).
\end{equation}
If we take one these two solutions, $(\alpha_0, \beta_0)$, the equality \eqref{eq:fh} can be written as
\begin{equation}
\begin{split}
    49\left(\frac{5}{8}\right)^2 \|r h'_\xi\|^2  &+ \left\|\alpha_0 r h'_\xi + \beta_0 r^2 h^{\prime\prime}_\xi + \frac{9}{2} r^2 h^{\prime\prime}_\xi\right\|^2 \\
    &= \left\|r f^{(4)\prime}_\xi + \frac{3}{2} r^2 f^{(4)\prime\prime}_\xi\right\|^2,
\end{split}
\end{equation}
so that we have the inequality
\begin{equation}\label{eq:fh2}
    49\left(\frac{5}{8}\right)^2 \|r h'_\xi\|^2 \leqslant \left\|r f^{(4)\prime}_\xi + \frac{3}{2} r^2 f^{(4)\prime\prime}_\xi\right\|^2.
\end{equation}
From the differential equation \eqref{eq:h} we have
\begin{equation}\label{eq:hhprime}
    10 \|h_\xi\|^2 + 9 \|r h'_\xi\|^2 = \|f^{(2)}_\xi\|^2 = 1
\end{equation}
for all $\xi$. Then, using Eq.~\eqref{eq:fh2}, we have
\begin{equation}\label{eq:fh3}
\begin{split}
    1 &= 10 \|h_\xi\|^2 + 9 \|r h'_\xi\|^2 \leqslant 49 \|r h'_\xi\|^2 \\
    &\leqslant \left(\frac{8}{5}\right)^2 \left\|r f^{(4)\prime}_\xi + \frac{3}{2} r^2 f^{(4)\prime\prime}_\xi\right\|^2.
\end{split}
\end{equation}
We have chosen the function $f^{(4)}_\xi(r)$ in such a way that
\begin{equation}
    \left\|r f^{(4)\prime}_\xi + \frac{3}{2} r^2 f^{(4)\prime\prime}_\xi\right\|^2 \to \frac{1}{4}\left(\frac{3}{2}\cdot\frac{3}{2}-1\right) = \left(\frac{5}{8}\right)^2
\end{equation}
when $\xi \to 1$. From Eq.~\eqref{eq:fh3} we get that $\|r h'_\xi\| \to 1/7$ and then from Eq.~\eqref{eq:hhprime} we obtain that $\|h_\xi\| \to 2/7$ when $\xi \to 1$.
Results of numerical integration for $\xi \approx 1$ agree with this conclusion.
Now, let us take the function
\begin{equation}\label{eq:htilde}
    \tilde{h}_\xi(r) = \frac{h_\xi(r)}{\|h_\xi\|}.
\end{equation}
This function is normalized, $\|\tilde{h}_\xi\| = 1$, and from Eq.~\eqref{eq:fh4} we get
\begin{equation}
\begin{split}
    \left\|r\tilde{h}'_\xi + 9r^2 \tilde{h}^{\prime\prime}_\xi + \frac{9}{2} r^3 \tilde{h}^{\prime\prime\prime}_\xi\right\| &=
    \frac{1}{\|h_\xi\|} \left\|r f^{(4)\prime}_\xi + \frac{3}{2} r^2 f^{(4)\prime\prime}_\xi\right\| \\
    &\to \frac{7}{2} \cdot \frac{5}{8} = \frac{35}{16}
\end{split}
\end{equation}
when $\xi \to 1$. From the definition of $\hat{Z}^{(3)}$, Eq.~\eqref{eq:ZZ3}, we have
\begin{equation}
    \inf \langle \hat{Z}^{(6)} \rangle = \frac{1}{560}\left(\frac{35}{16}\right)^2 = \frac{35}{4096},
\end{equation}
which concludes the proof.
\end{proof}

The functions $f^{(6)}_\xi(r)$ that minimize $\langle\hat{Z}^{(3)}\rangle$ can now be taken as
\begin{equation}
    f^{(6)}_\xi(r) = \tilde{h}_\xi(r).
\end{equation}
The corresponding function $\Psi(r)$ is shown in Fig.~\ref{fig:psi3} and the corresponding wave functions read as
\begin{equation}
    \psi(x_1, \ldots, x_6) = \Psi(r) = \sqrt{\frac{6}{\pi^3}} f^{(6)}_\xi((x^2_1 + \ldots + x^2_6)^3).
\end{equation}
As in the previous case,
we have not been able to find an explicit expression for $f^{(6)}_\xi(r)$ in terms of known special function, but we could
prove that these functions have the desired property without such an expression.

\begin{figure}[h]
\includegraphics[scale=0.8]{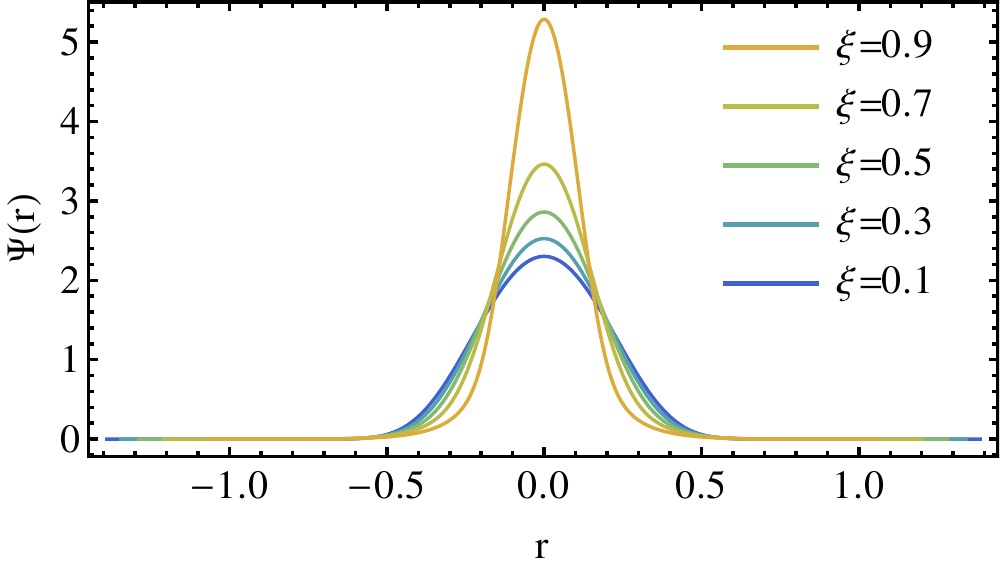}
\caption{The wave function $\Psi(r)$, defined by Eqs.~\eqref{eq:fn2} for $n=3$, \eqref{eq:htilde}, \eqref{eq:hxi} and \eqref{eq:Hxi} 
for a few different values of $\xi$.}\label{fig:psi3}
\end{figure}

\section{Conclusion}

In conclusion, we summarize our results obtained in this work. We have tried to minimize the uncertainty product $\sigma_{xp} \sigma_{px}$
and its multipartite generalizations. We have shown that pure states are sufficient to minimize such uncertainty products. We have 
outlined the general approach for the case of an arbitrary even number of parties and explicitly found
the infimum of these products over a special class of states with real spherically symmetric wave functions in bipartite, four-partite and six-partite
cases. We have also constructed parametrized families of states that approach the corresponding infimums by varying the parameter. Namely, we have 
shown that any bipartite state for which
\begin{equation}
    \frac{1}{8} < \sigma_{xp} \sigma_{px} < \frac{1}{4}
\end{equation}
is entangled; that any four-partite state for which 
\begin{equation}
    \frac{1}{30} < \sigma_{xxpp} \sigma_{ppxx} < \frac{1}{16}
\end{equation}
is entangled; and that any six-partite state such that
\begin{equation}
    \frac{35}{64} \cdot \frac{1}{64} = \frac{35}{4096} < \sigma_{xxxppp} \sigma_{pppxxx} < \frac{1}{64}
\end{equation}
is entangled and provided examples of states that tend to the limits of $1/8$, $1/30$ and $35/4096$ respectively. 
The next step will be to answer the question whether these limiting values can be improved by using more general states.

\appendix

\section{Changing the order of integration}

We need to verify that we can exchange summation and integration in Eq.~\eqref{eq:ff}. We use the following theorem that
can be found, for example, in \cite{budak}.
\begin{thrm}
Assume that the sequence of functions $F_n(t)$, $n = 0, 1, \ldots$, defined on an infinite interval $[t_0, +\infty)$, uniformly converges 
to the function $F(t)$ on any finite interval $[t_0, T]$, $T > t_0$. If the integral $\int^{+\infty}_{t_0} F_n(t)\,dt$ converges uniformly
with respect to $n$, then the integral $\int^{+\infty}_{t_0} F(t)\,dt$ exists and the equality 
\begin{equation}
    \lim_{n \to +\infty} \int^{+\infty}_{t_0} F_n(t)\,dt = \int^{+\infty}_{t_0} F(t)\,dt.
\end{equation}
is valid. In other words, under these conditions the limit of integrals is equal to the integral of the point-wise limiting function.
\end{thrm}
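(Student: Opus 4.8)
The plan is to prove this by the standard ``three-epsilon'' technique: split each improper integral into a finite piece over $[t_0, T]$ and a tail over $[T, +\infty)$, and control the two pieces by the two separate hypotheses. First I would make the hypothesis on uniform convergence of the integral with respect to $n$ explicit: it means that for every $\varepsilon > 0$ there is a $T_0 > t_0$ such that
\[
    \left| \int_T^{+\infty} F_n(t)\,dt \right| < \varepsilon
\]
holds for all $T \geqslant T_0$ and \emph{all} $n$ simultaneously. This uniform-in-$n$ smallness of the tails is exactly the ingredient that mere uniform convergence on finite intervals cannot supply, and it is what legitimizes the interchange.

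The first substantial step is to establish that the limiting improper integral $\int_{t_0}^{+\infty} F(t)\,dt$ even exists, which I would do through the Cauchy criterion. For any $T_2 > T_1 \geqslant T_0$, the interval $[T_1, T_2]$ is finite, so by the assumed uniform convergence $F_n \to F$ on it we have $\int_{T_1}^{T_2} F_n \to \int_{T_1}^{T_2} F$ as $n \to \infty$ (a uniform limit of integrable functions is integrable on a finite interval and its integral converges, since $|\int F_n - \int F| \leqslant (T_2-T_1)\sup|F_n - F|$). On the other hand, writing $\int_{T_1}^{T_2} F_n = \int_{T_1}^{+\infty} F_n - \int_{T_2}^{+\infty} F_n$ and invoking the uniform tail bound gives $|\int_{T_1}^{T_2} F_n| < 2\varepsilon$ for all $n$; passing to the limit $n \to \infty$ yields $|\int_{T_1}^{T_2} F| \leqslant 2\varepsilon$. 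As $\varepsilon$ is arbitrary, the Cauchy criterion for improper integrals is satisfied, so $\int_{t_0}^{+\infty} F$ converges; in particular its own tail $\int_T^{+\infty} F$ can be made $< \varepsilon$ for $T$ large.

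With existence secured, the second step is the interchange itself. Given $\varepsilon > 0$, I would fix $T = T_0$ large enough that $|\int_T^{+\infty} F_n| < \varepsilon$ for all $n$ and $|\int_T^{+\infty} F| < \varepsilon$. On this fixed finite interval $[t_0, T]$ the uniform convergence $F_n \to F$ furnishes an $N$ with $|\int_{t_0}^{T} F_n - \int_{t_0}^{T} F| < \varepsilon$ whenever $n \geqslant N$. Applying the triangle inequality to the decomposition into the finite part and the two tails then gives
\[
    \left| \int_{t_0}^{+\infty} F_n - \int_{t_0}^{+\infty} F \right| < 3\varepsilon
\]
for all $n \geqslant N$, which proves the claimed equality. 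The main obstacle is the existence step rather than the final estimate: uniform convergence on bounded intervals alone says nothing about the behaviour near $+\infty$, and the entire argument hinges on using the uniform-in-$n$ tail bound precisely to transport it from the functions $F_n$ to the limit $F$, thereby closing the Cauchy criterion.
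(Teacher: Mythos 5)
Your proof is correct, but there is nothing in the paper to compare it against: the paper does not prove this theorem at all. It is stated as a known result quoted from the textbook of Budak and Fomin (the reference \cite{budak}), and the appendix merely verifies its hypotheses for the particular integrands $F_n(t) = c_0 e^r S_n(\xi,t) J_0(2\sqrt{rt})e^{-t}$ arising from Eq.~\eqref{eq:ff}. Your three-epsilon argument is precisely the standard textbook proof that the citation points to: you correctly unpack ``uniform convergence of the integral with respect to $n$'' as a uniform-in-$n$ tail bound, you correctly see that the real content is the \emph{existence} of $\int_{t_0}^{+\infty}F(t)\,dt$ (secured via the Cauchy criterion by transporting the uniform tail bound from the $F_n$ to $F$, using local integrability of the uniform limit on finite intervals), and the final split into a finite piece plus two tails is then routine. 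The only cosmetic imprecision is in the last sentence of your existence step: letting $T_2 \to +\infty$ in $\bigl|\int_{T_1}^{T_2} F\,dt\bigr| \leqslant 2\varepsilon$ gives a tail bound of $2\varepsilon$ rather than $\varepsilon$, so your final estimate is really $4\varepsilon$ instead of $3\varepsilon$; since $\varepsilon$ is arbitrary this changes nothing. In short, you have supplied the proof that the paper delegates to its reference.
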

We apply this theorem to the functions
\begin{equation}
\begin{split}
    F_n(t) &= c_0 e^r S_n(\xi, t)J_0(2\sqrt{rt})e^{-t} \\
           &= c_0 e^r \sqrt{t}e^{-t}S_n(\xi, t) \frac{J_0(2\sqrt{rt})}{\sqrt{t}},
\end{split}
\end{equation}
where $\xi$ and $r$ are fixed. From the representation \eqref{eq:S} it is clear that the partial sum $S_n(\xi, t)$
converges uniformly to $I_0(t\sqrt{\xi})$ on any finite interval. To prove that the integral $\int^{+\infty}_0 F_n(t) \, dt$
converges uniformly note that 
\begin{equation}
    0 < \sqrt{t} e^{-t} S_n(\xi, t) < \sqrt{t} e^{-t} I_0(t\sqrt{\xi}).
\end{equation}
The Bessel function $I_0(t)$ has the following asymptotic expansion for large $t$ \cite{gr-I0}:
\begin{equation}
    I_0(t) \sim \frac{e^t}{\sqrt{2\pi t}} 
    \left(1 + \frac{1}{8t} + \frac{9}{128t^2} + \ldots\right),
\end{equation}
from which we immediately get that $\sqrt{t} e^{-t} I_0(t\sqrt{\xi}) \to 0$ when $t \to +\infty$ (take into account that $0 < \xi < 1$)
and, as a consequence, we see that the function $\sqrt{t} e^{-t} I_0(t\sqrt{\xi})$ is bounded on $[0, +\infty)$. We have just proved that
the functions $\sqrt{t} e^{-t} S_n(\xi, t)$ are uniformly bounded. On the other hand, the function $J_0(2\sqrt{rt})/\sqrt{t}$ 
(which does not depend on $n$) is integrable for $r>0$ \cite{gr-J0-2}
\begin{equation}
    \int^{+\infty}_0 \frac{J_0(2\sqrt{rt})}{\sqrt{t}} \, dt = 2\int^{+\infty}_0 J_0(2t\sqrt{r}) \, dt = \frac{1}{\sqrt{r}}.
\end{equation}
From this we can conclude that the integral $\int^{+\infty}_0 F_n(t) \, dt$ converges uniformly and thus we can apply the Theorem~1
to verify the correctness of the expression \eqref{eq:fr}.

\section{Normalization of the coefficients $c_{nm}$}\label{app:cnm}

Here we prove that the normalization condition for the coefficients defined by Eq.~\eqref{eq:cnm},
\begin{equation}\label{eq:c1}
    \sum^{+\infty}_{n,m=0} c^2_{nm}=1 ,
\end{equation}
can be obtained directly from the explicit expression \eqref{eq:cnm2} for these coefficients.
In fact, we have
\begin{equation}
    \sum^{+\infty}_{n,m=0} c^2_{nm} = \sum^{+\infty}_{N = 0} \sum_{n+m=N} c^2_{nm} = \sum^{+\infty}_{N = 0} \sideset{}{'}\sum_{n+m=4N} c^2_{nm},
\end{equation}
where the prime means that the sum runs over only those pairs of $n$ and $m$ that have the form $n = 4n'$, $m = 4m'$ or $n = 4n'+2$, $m = 4m'+2$. 
The first equality is simply the "diagonal" summation of this double series with non-negative coefficients (so the sum does not depend on the
order of summation) and the second equality is valid since if $N$ is not a multiple of 4 then $c_{nm} = 0$ for all pairs $(n,m)$ with $n+m=N$. 
From the expression \eqref{eq:cnm2} for the coefficients we see that we need to compute the following primed sum:
\begin{equation}\label{eq:nm}
\begin{split}
    S_N = \sideset{}{'}\sum_{n+m=4N} &\binom{n}{\frac{n}{2}}\binom{m}{\frac{m}{2}} = \sum^{N}_{n=0} \binom{4n}{2n}\binom{4N-4n}{2N-2n} \\
    &+ \sum^{N-1}_{n=0} \binom{4n+2}{2n+1}\binom{4N-4n-2}{2N-2n-1}.
\end{split}
\end{equation}
To compute this sum, let us introduce the function $F(x)$ via the equality \cite{gr-F} 
\begin{equation}
    F(x) \equiv \sum^{+\infty}_{n=0} \binom{2n}{n} x^n = \frac{1}{\sqrt{1-4x}}.
\end{equation}
Then we can observe that the sum \eqref{eq:nm} can be written in a compact form with the help of $F(x)$ as 
\begin{equation}
    S_N = \sum^{2N}_{n=0} [x^n] F(x) [x^{2N-n}] F(x) = [x^{2N}] F^2(x),
\end{equation}
where $[x^n]F(x)$ is the coefficient of $x^n$ in the Taylor expansion of $F(x)$.
Thus, we have
\begin{equation}
    S_N = [x^{2N}] \frac{1}{1-4x} = 2^{4N}.
\end{equation}
Now we can finish the computation of the sum on the left-hand side of Eq.~\eqref{eq:c1}:
\begin{equation}
\begin{split}
    \sum^{+\infty}_{n,m=0} c^2_{nm} &= \frac{\pi}{2K(\xi)} \sum^{+\infty}_{N = 0} \binom{2N}{N}^2 S_N \left(\frac{\xi^2}{16^2}\right)^N \\
    &= \frac{\pi}{2K(\xi)} \sum^{+\infty}_{N = 0} \binom{2N}{N}^2 \left(\frac{\xi^2}{16}\right)^N = 1,
\end{split}
\end{equation}
where we have taken into account the relation \eqref{eq:Kxi}. This completes the proof of the equality \eqref{eq:c1}.

\section{Roots of Eq.~\eqref{eq:alpha-eq}}\label{app:roots}

Here we prove that the numbers \eqref{eq:alpha} are the roots of the equation \eqref{eq:alpha-eq} where the coefficients $b_k$ are given by Eq.~\eqref{eq:b2}.
We have
\begin{equation}\label{eq:alpha2}
\begin{split}
    &\sum^n_{k=1} b_k (\alpha)_k = \sum^n_{k=1} \sum^k_{j=1} \frac{(-1)^{k+j}}{k!} \binom{k}{j} \binom{jn}{n} (\alpha)_k \\
    &= \sum^n_{j=1} (-1)^j \binom{jn}{n} \sum^n_{k=j} \frac{(-1)^k}{k!} \binom{k}{j} (\alpha)_k,
\end{split}
\end{equation}
where we have exchanged the summation order according to the equality $\sum^n_{k=1} \sum^k_{j=1} = \sum^n_{j=1} \sum^n_{k=j}$. 
The inner sum on the right-hand side of Eq.~\eqref{eq:alpha2}
is easier to compute when $\alpha$ is a sufficiently large integer number. In this case we can write $(\alpha)_k = \alpha!/(\alpha-k)!$ and
transform this sum as follows:
\begin{equation}\label{eq:ain}
\begin{split}
    &\sum^n_{k=j} \frac{(-1)^k}{k!} \binom{k}{j} (\alpha)_k = \binom{\alpha}{j} \sum^n_{k=j} (-1)^k \binom{\alpha-j}{k-j} = \\
    &(-1)^n \binom{\alpha}{j} \binom{\alpha-j-1}{n-j} = \frac{(-1)^{n+1}}{n!}\frac{(\alpha)_{n+1}}{j-\alpha}\binom{n}{j},
\end{split}
\end{equation}
where we have used a simple relation for the binomial coefficients \cite{gr-bin2}. Note that
\begin{equation}
    \frac{(\alpha)_{n+1}}{j-\alpha} = \frac{\alpha(\alpha-1)\ldots(\alpha-n)}{j-\alpha}
\end{equation}
is a polynomial of $\alpha$ for $j = 1, \ldots, n$. Since the polynomial on the left-hand side of Eq.~\eqref{eq:ain} equals 
the polynomial on the right-hand side for all sufficiently large integer $\alpha$, these two polynomials must be same, and the equality
\eqref{eq:ain} must be valid for all $\alpha$. When we substitute the expression on the right-hand side of Eq.~\eqref{eq:ain} into Eq.~\eqref{eq:alpha2}
we see that we need to prove that
\begin{equation}
    S(\alpha) \equiv (\alpha)_{n+1} \sum^n_{j=1} \frac{(-1)^j}{j-\alpha} \binom{n}{j} \binom{jn}{n} = 0
\end{equation}
for $\alpha$ given by Eq.~\eqref{eq:alpha}. For $\alpha = 0$ this is clearly true since $(0)_{n+1} = 0$. For $\alpha = i/n$, $i = 1, \ldots, n-1$, we have
\begin{equation}\label{eq:Sin}
    S\left(\frac{i}{n}\right) = \frac{n}{n!}\left(\frac{i}{n}\right)_{n+1}\sum^n_{j=1} (-1)^j \binom{n}{j} \prod^{n-1}_{\substack{l=0 \\ l \not= i}} (nj-l),
\end{equation}
since the binomial coefficient $\binom{jn}{n}$ can be expanded as
\begin{equation}
    \binom{jn}{n} = \frac{1}{n!} \prod^{n-1}_{l=0} (jn-l)
\end{equation}
and $1/(j-\frac{i}{n}) = n/(nj-i)$ cancels one term of this product. Since $i \not= 0$ the term with $l=0$, i.e. $jn$, is always present. Then we can 
expand the product on the right-hand side of Eq.~\eqref{eq:Sin} and get
\begin{equation}\label{eq:Sin2}
    S\left(\frac{i}{n}\right) = \sum^{n-1}_{p = 1} A_p \sum^n_{j=1} (-1)^j \binom{n}{j} j^p = 0,
\end{equation}
where $A_p$ are some numbers and each inner sum is zero due to another standard relation for binomial coefficients \cite{gr-bin3}, 
\begin{equation}\label{eq:bi}
    \sum^n_{j=1} (-1)^j \binom{n}{j} j^p = \sum^n_{j=0} (-1)^j \binom{n}{j} j^p = 0
\end{equation}
for $p = 1, \ldots, n-1$. It is important to note here that if $i \not= 0$ then $p$ in the sum \eqref{eq:Sin2} starts from $1$ and in this case
the first equality in Eq.~\eqref{eq:bi} holds true.

\end{document}